\documentclass[11pt]{article}
\usepackage{fullpage}
\usepackage{amsmath,amsfonts,amsthm,amssymb,bbm}
\usepackage[colorlinks]{hyperref}
	\hypersetup{linkcolor=[rgb]{.7,0,.7}}
	\hypersetup{citecolor=[rgb]{.5,0,.5}}
	\hypersetup{urlcolor=[rgb]{.7,0,.7}}
\usepackage[capitalise]{cleveref}
\usepackage{tikz}
\usepackage{xcolor}
\usepackage{titlesec}

\newcommand{\mc}{\mathcal}
\newcommand{\E}{\mathop{\mathbf{E}\mbox{}}\limits}

\newcommand{\one}{\mathbf{1}}

\newcommand{\supp}{\mathrm{supp}}
\newcommand{\val}{\mathrm{val}}

\newcommand{\Mid}{\mathrel{\Big|}}

\newcommand{\set}[1]{\left\{ #1 \right\}}	
\newcommand{\ts}{\textsuperscript}

\newtheorem{theorem}{Theorem}[section]
\newtheorem{lemma}[theorem]{Lemma}
\newtheorem{proposition}[theorem]{Proposition}
\newtheorem{corollary}[theorem]{Corollary}
\newtheorem{claim}[theorem]{Claim}

\theoremstyle{remark}

\theoremstyle{definition}
\newtheorem{definition}[theorem]{Definition}

\sloppy
\begin{document}
\title{Polynomial Bounds On Parallel Repetition For All 3-Player Games With Binary Inputs}


\author{Uma Girish\thanks{\scriptsize Princeton University. E-mail: \href{ugirish@cs.princeton.edu}{\texttt{ugirish@cs.princeton.edu}.} Research supported by the Simons Collaboration on Algorithms and Geometry, by a Simons Investigator Award, by the National Science Foundation grants No. CCF-1714779, CCF-2007462 and by the IBM Phd Fellowship.} \and Kunal Mittal\thanks{\scriptsize Princeton University. E-mail: \href{kmittal@cs.princeton.edu}{\texttt{kmittal@cs.princeton.edu}.} Research supported by the Simons Collaboration on Algorithms and Geometry, by a Simons Investigator Award and by the National Science Foundation grants No. CCF-1714779, CCF-2007462.} \and Ran Raz\thanks{\scriptsize Princeton University.  E-mail: \href{ranr@cs.princeton.edu}{\texttt{ranr@cs.princeton.edu}.}  Research supported by the Simons Collaboration on Algorithms and Geometry, by a Simons Investigator Award and by the National Science Foundation grants No. CCF-1714779, CCF-2007462.} \and Wei Zhan\thanks{\scriptsize Princeton University.  E-mail: \href{weizhan@cs.princeton.edu}{\texttt{weizhan@cs.princeton.edu}.} Research supported by the Simons Collaboration on Algorithms and Geometry, by a Simons Investigator Award and by the National Science Foundation grants No. CCF-1714779, CCF-2007462.}}
\date{}
\maketitle

\begin{abstract}
	We prove that for every 3-player (3-prover) game $\mathcal G$ with value less than one, whose query distribution has the support $\mathcal S = \{(1,0,0), (0,1,0), (0,0,1)\}$ of hamming weight one vectors, the value of the $n$-fold parallel repetition $\mathcal G^{\otimes n}$ decays polynomially fast to zero; that is, there is a constant $c = c(\mathcal  G)>0$ such that the value of the game $\mathcal  G^{\otimes n}$ is at most $n^{-c}$.
	
	Following the recent work of Girish, Holmgren, Mittal, Raz and Zhan (STOC 2022), our result is the missing piece that implies a similar bound for a much more general class of multiplayer games:
	For \textbf{every} 3-player game $\mathcal G$ over \emph{binary questions} and \emph{arbitrary answer lengths}, with value less than 1, there is a constant $c = c(\mathcal G)>0$ such that the value of the game $\mathcal G^{\otimes n}$ is at most $n^{-c}$.
	
	Our proof technique is new and requires many new ideas.
	For example, we make use of the Level-$k$ inequalities from Boolean Fourier Analysis, which, to the best of our knowledge, have not been explored in this context prior to our work.
\end{abstract}

\newpage

\section{Introduction}
Our main object of study is multiplayer (multiprover) games.
A $k$-player game $\mc G$ consists of $k$ players who are playing against a referee.
The game begins by the referee sampling a $k$-tuple of questions $(x^1, \dots, x^k)$ from some global distribution $Q$.
The referee then gives the question $x^j$ to the $j\ts{th}$ player, for each $j\in [k]$, based on which they give back an answer $a^j$.
Finally, the referee evaluates a predicate $V((x^1, \dots, x^k), (a^1, \dots, a^k))$ and says that the players win if and only if the predicate evaluates to true.
The value $\val(\mc G)$ of the game is defined to be the maximum  winning probability for the players.
Note that the probability here is over the randomness used by the referee to sample $(x^1, \dots, x^k)\sim Q$, and the maximum is over the strategies used by the players.

Given a game $\mc G$ with value $\val(\mc G)<1$, it is natural to consider the parallel repetition of the game $\mc G$, defined as follows: In the $n$-fold repetition $\mc G^{\otimes n}$ of the game $\mc G$, the referee independently samples questions for $n$ copies of the game $\mc G$; that is, the referee samples $(x^1_i, \dots, x^k_i)\sim Q$ independently for $i\in [n]$.
Then, the referee \emph{simultaneously} gives questions $x_1^j,\dots, x_n^j$ to the $j\ts{th}$ player, for each $j\in[k]$, who then gives back answers $a_1^j,\dots, a_n^j$.
The players are said to win the game if for each $i\in [n]$, the predicate $V((x_i^1, \dots, x_i^k), (a_i^1, \dots, a_i^k))$ evaluates to true.

With the above definition of the $n$-fold repeated game $\mc G^{\otimes n}$, it is interesting to study the behavior of $\val(\mc G^{\otimes n})$ with respect to $n$, and the initial parameters of the game $\mc G$~\cite{FRS94}.
Observe that $\val(\mc G^{\otimes n})\geq \val(\mc G)^n$, since any strategy that achieves value $\val(\mc G)$ in the game $\mc G$, when repeated independently for all copies $i\in [n]$, achieves the value $\val(\mc G)^n$ in the game $\mc G^{\otimes n}$.
While one would expect such an inequality to be tight, this is far from true; there are games such that $\val(\mc G^{\otimes n})$ is exponentially larger (with respect to $n$) compared to $\val(\mc G)^n$.
The crucial reason why this can happen is that in the game $\mc G^{\otimes n}$ the players are allowed to correlate their answers among different copies $i\in [n]$ of the game.
That is, it is not necessary (and not optimal) for every player's answer for the $i\ts{th}$ copy of the game to depend only on the $i\ts{th}$ question they receive.

Nevertheless, Raz \cite{Raz98} proved that for any \emph{2-player} game $\mc G$ with $\val(\mc G)<1$, it holds that $\val(\mc G^{\otimes n}) = 2^{-\Omega(n)}$.
This, and related techniques and results, turned out to be sufficient for a large number of applications: in the theory of interactive proofs \cite{BOGKW88}, PCPs and hardness of approximation \cite{BGS98, Fei98, Has01}, geometry of foams \cite{FKO07, KORW08, AK09}, quantum information \cite{CHTW04}, and communication complexity \cite{PRW97, BBCR13, BRWY13}.
The reader is referred to this survey \cite{Raz10} for more details.
There have been many subsequent improvements that improve the constants in the bounds, and even get better bounds based on the value $\val(\mc G)$ of the initial game \cite{Hol09, Rao11, BRRRS09, RR12, DS14, BG15}.

The case of 2-player games, hence,  is fairly well-understood with regards to the operation of parallel repetition.
On the other hand, despite much effort, the general question of parallel repetition for multiplayer games remains wide open.
The only general bound, by \cite{Ver96}, that applies to all $k$-player games, says that if $\val(\mc G)<1$, then $\val(\mc G^{\otimes n}) \leq \frac{1}{\alpha(n)}$, where $\alpha(n)$ is a function which grows like the (extremely slowly growing) inverse Ackermann function.
The weak bounds here result from a black-box use of the Density Hales-Jewett Theorem \cite{FK91, Pol12} from extremal combinatorics.

While there are some known potential applications of bounds on parallel repetition of multiplayer games, for example, \cite{MR21} show a connection between parallel repetition and super-linear lower bounds for non-uniform Turing machines, we believe that the notion of parallel repetition is so basic that it deserves attention in its own right.
As mentioned by \cite{DHVY17}, there are many problems in complexity theory that are inherently high dimensional, and which share this sudden difficulty of being tractable beyond dimension two.
For example, whereas direct sum and direct product theorems are known for two-party communication complexity, no such results are known for multiparty communication complexity in the number-on-forehead model (which is deeply related to proving new lower bounds in circuit complexity), for seemingly similar reasons to why there has not been much progress on multiplayer parallel repetition. 

Recent work, however, has made some progress on proving parallel repetition bounds for some special classes of multiplayer games:
\begin{enumerate}
	\item \textbf{Connected Games:}\label{intro:conn_game} Dinur, Harsha, Venkat and Yuen \cite{DHVY17} consider games which satisfy a certain connectedness property and show that the value of any such game satisfies an exponential decay bound under parallel repetition:  if $\val(\mc G)<1$ then $\val(\mc G^{\otimes n}) = 2^{-\Omega(n)}$. A $k$-player game $\mc G$ is said to have this connectedness property if the graph $\mc H_\mc G$ defined as follows is connected: The vertices of the graph are all the possible $k$-tuples of questions to the players (which are asked with non-zero probability), and there is an edge between two such $k$-tuples if they differ in the question to exactly one of the $k$-players.
	
	The proof for these games uses information theoretic techniques, and builds on the works on 2-player games by \cite{Raz98, Hol09}.
	
	\item \textbf{The GHZ Games:}\label{intro:ghz_game} \cite{HR20, GHMRZ21} show that any game $\mc G$ over the set of questions $\set{(x,y,z)\in \set{0,1}^3 : x+ y + z = 0\pmod{2} }$ satisfies a polynomial bound on the value of parallel repetition: if $\val(\mc G)<1$ then $\val(\mc G^{\otimes n}) = n^{-\Omega(1)}$. For such games, all vertices in the graph $\mc H_\mc G$ (as defined above in point~\ref{intro:conn_game}) are isolated, and the techniques of \cite{DHVY17} fail to be applicable.
	
	The known proofs for this case use Fourier analytic techniques that crucially rely on the fact that the inputs to the players define a linear subspace of $\mathbb{F}_2^3$.
	
	\item A recent work \cite{GHMRZ22} considers the problem of parallel repetition for 3-player games in which each player is asked a binary question. They do a case analysis of all such games and divide the general problem into the following cases:
	\begin{enumerate}
		\item Connected games or games that are essentially 2-player games: An exponential decay bound is known \cite{Raz98, DHVY17}.
		\item Games over the question set of the GHZ game (see point~\ref{intro:ghz_game}): A polynomial decay bound is known.
		\item Games over the question set $\set{(x,y,z)\in \set{0,1}^3 : x+y+z\not=2}$: They show that such games fall into a class of games which they call \textbf{playerwise connected games}, a generalization of the class of connected games. Informally, a game $\mc G$ is said to be playerwise connected if the \emph{projection} of the graph $\mc H_\mc G$ onto each of the $k$-players is connected. They show that any playerwise connected game satisfies a polynomial decay bound in the value of parallel repetition: if $\val(\mc G)<1$ then $\val(\mc G^{\otimes n}) = n^{-\Omega(1)}$.
		\item \label{intro:4_pt_and} Games over the question set $\set{(x,y,z)\in \set{0,1}^3 : z = x y}$: They call this the four-point AND distribution, and show that any such game satisfies a polynomial bound in the value of parallel repetition.
		\item \label{intro:3_pt_case} Games over the set of questions $\mc S = \set{(1,0,0), (0,1,0), (0,0,1)}$ of hamming-weight one: They do not prove a general bound for games in this class, but rather only for games where the answers given by each of the three players is also binary. Under this extra assumption, they are in fact able to prove an exponential decay bound under parallel repetition.
		
		A very interesting game which they consider is the \textbf{anti-correlation game} defined as follows: The referee samples the questions $(x^1,x^2,x^3) \in \mc S$ uniformly at random, and the two players who receive the input 0 must produce different outputs in $\set{0,1}$. This game has the special property that while its \emph{non-signalling} value is less than 1, the non-signalling value does not decrease at all under parallel repetition \cite{HY19}.
	\end{enumerate}
\end{enumerate}

The main topic of interest of the current paper are games described above in point~\ref{intro:3_pt_case}, that is, all games over the question set $\mc S = \set{(1,0,0), (0,1,0), (0,0,1)}$.
The work \cite{GHMRZ22} shows that any bounds for a special subclass of such games qualitatively translate to the same bounds for all games in this class.
In particular, polynomial decay bounds for the value of parallel repetition for the following subclass of games implies polynomial decay bounds for all games over the question set $\mc S = \set{(1,0,0), (0,1,0), (0,0,1)}$:

\begin{definition}\label{intro:main_game_def}
	Let $k\in\mathbb{N}$, and let $\mc{S}=\{(1,0,0),(0,1,0),(0,0,1)\}$.
	We define a 3-player game $\mc{G}_k$ on 3 players Alice, Bob and Charlie as follows:
	\begin{enumerate}
		\item The referee samples $(x,y,z)\in \mc S$ uniformly at random, and gives $x,y,z$ to the three players respectively.
		\item The players answer $a\in \set{0,1}^k, b\in \set{0,1}^k, c\in [k]$ respectively.
		\item The winning predicate is defined as:
		\[ V_k((x,y,z),(a,b,c)) = \begin{cases}
			b_c = 0, & \textrm{ if }(x,y,z)=(1,0,0) \\
		a_c = 0, & \textrm{ if }(x,y,z)=(0,1,0) \\
		\forall i\in[k], a_i+ b_i\geq 1,  & \textrm{ if }(x,y,z)=(0,0,1) \end{cases}. \]
	\end{enumerate}
    In other words, two randomly chosen players receive 0 as input and the third player gets a 1 as input.
    The predicate only depends on the two players who get 0 as input, and only those two players play the game.
    If Charlie and Alice (or Bob) are playing, Charlie must point to an index where Alice (or Bob) outputs 0.
    On the other hand, if Alice and Bob are playing, they must each output $k$-bit strings such that the bit-wise-OR of the two strings is the all 1s string.
\end{definition}

Our main result is a polynomial decay bound on the parallel repetition for all games in the above subclass:

\begin{theorem}\label{thm:intro_main}
There exists an absolute constant $c>0$ such that the following holds:
For every $k\in\mathbb{N}$, and for every sufficiently large $n\in \mathbb{N}$, it holds that $\val(\mc G_k^{\otimes n}) \leq n^{-c}$, where the game $\mc G_k$ is as defined in Definition~\ref{intro:main_game_def}.
\end{theorem}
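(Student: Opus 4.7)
My approach would be to argue by contradiction: assume some strategy wins $\mc{G}_k^{\otimes n}$ with probability $\epsilon > n^{-c}$ for a small absolute constant $c$ to be chosen later, and construct from it a single-shot strategy whose value exceeds $\val(\mc{G}_k)$. The overall framework is the Fourier-analytic ``correlated-sampling/extraction'' paradigm developed for the GHZ game in \cite{HR20, GHMRZ21, GHMRZ22}, adapted to the hamming-weight-one question distribution and, critically, made uniform in the answer alphabet size $k$. Let $W$ denote the event that all $n$ copies are won. By an averaging argument using $\log(1/\epsilon) \le c\log n$, I would identify a coordinate $j \in [n]$ such that, after conditioning on $W$ and on appropriate auxiliary information from coordinates $\ne j$, the conditional distribution of $(x_j,y_j,z_j)$ is close in total variation to uniform on $\mc{S}$, while the conditional strategies at $j$ remain close to marginal-of-product form.

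The next step would be to expand the players' strategies in a Fourier basis suited to the product measure on $\mc{S}^n$. Because $|\mc{S}|=3$, the natural basis is $\mathbb{Z}_3$-type rather than the $\mathbb{Z}_2^n$ basis used for GHZ; the three characters on $\mc{S}$ correspond to the three choices of which player receives the $1$. The winning constraints then impose strong relations on the Fourier coefficients of the output functions: the Alice-Bob covering constraint $a\vee b = \mathbf{1}$ on $(0,0,1)$ inputs forces the zero-sets of Alice's and Bob's $k$-bit answers to be essentially disjoint, while the Charlie-Alice and Charlie-Bob pointing constraints force Charlie's $[k]$-valued output to lie in those zero-sets. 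Using these relations together with the closeness of the conditional input distribution, I would extract a single-shot strategy by letting the three players use shared randomness to simulate all coordinates other than $j$, plant their true single-shot inputs at coordinate $j$, and reply according to the restriction of the $n$-fold strategy; a careful accounting of the Fourier correlations should give a winning probability strictly above $\val(\mc{G}_k)$, the desired contradiction.

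The main obstacle, flagged by the authors in the abstract, is producing an exponent $c$ that does not depend on $k$. A naive analysis accumulates $k$-dependence because Alice and Bob each have $k$ output-indicator functions $f^A_{j,\ell}, f^B_{j,\ell}: \{0,1\}^n \to \{0,1\}$. The key new ingredient would be the Level-$k$ inequality from Boolean Fourier Analysis, which bounds $\sum_{|S|=d}\hat f(S)^2$ for a Boolean $f$ of small mean $\mu$ by a quantity that decays rapidly in $d$ and depends only on $\mu$. Applied to the $f^A_{j,\ell}$ and $f^B_{j,\ell}$ and combined with a trade-off of the Fourier truncation degree against the typical mean of these indicators, this should decouple the Fourier energy bounds from $k$. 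I expect the hardest part to be precisely this trade-off — balancing the level of Fourier truncation against the ``mass'' lost from Alice's and Bob's rare but large-deviation outputs, while preserving the alignment between Alice, Bob, and Charlie throughout the reduction — which is presumably where the bulk of the new ideas alluded to in the abstract concentrate.
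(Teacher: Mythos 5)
Your proposal takes a genuinely different route from the paper's, and there are reasons to think your route is blocked precisely for this question set, even though its ingredients work well for GHZ.

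The paper does \emph{not} perform a single-coordinate ``correlated-sampling/extraction'' reduction to a one-shot strategy. Instead it argues directly: assume $P(W)\geq n^{-c}$ and derive a contradiction in three steps. First (\cref{section:step1}), a potential-function argument over fixings of coordinates produces product events $E_1\subseteq\mc X^{\otimes n}$, $E_2\subseteq\mc Y^{\otimes n}$ and a small set of fixed coordinates, such that $P(E_2|E_1)$ and $P(W|E_1,E_2)$ stay polynomially large while on all remaining coordinates Alice and Bob rarely both output $0$. Second (\cref{section:step2}), after re-biasing the query distribution so that Bob receives input $1$ with probability $\approx n^{-1+100c}$ (via \cref{lemma:change}), they apply ordinary $\{0,1\}^n$ Fourier analysis to the indicator of $E_1\wedge\{x_i=0\wedge f_{i,j}(x)=0\}$: averaging it over Bob-events reduces to summing Fourier mass over a subcube, and the Level-$k$ inequality (\cref{lemma:level}) — where ``$k$'' is the \emph{Fourier level}, not the answer alphabet size — bounds the higher-level contributions, upgrading ``rarely both zero'' to ``essentially independent'' at every coordinate. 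Third (\cref{section:step3}), the coordinates are partitioned into those that Alice or Bob ``owns'' ($G_{1,i}, G_{2,i}$), and a Chernoff/union-bound covering argument shows Charlie's pointers $G_1(z), G_2(z)$ cannot simultaneously cover the $1$'s of $y$ and $x$.

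Two aspects of your proposal appear to be non-starters here. First, the extraction paradigm you describe is exactly the ``forbidden subgraph'' style of argument, and \cite{HHR16} prove barriers against it for the question set $\mc S=\{(1,0,0),(0,1,0),(0,0,1)\}$; the paper explicitly advertises bypassing these barriers, not instantiating them. Second, the $\mathbb{Z}_3$-Fourier basis you propose on $\mc S^n$ will not inherit the algebraic leverage enjoyed by GHZ: the paper emphasizes that the GHZ-style Fourier techniques ``crucially rely on the fact that the inputs to the players define a linear subspace of $\mathbb F_2^3$,'' and $\mc S$ is not linear. Finally, your framing of the Level-$k$ inequality as the tool that removes $k$-dependence (with $k$ the answer size) conflates the two uses of the symbol; in the paper, the answer size $k$ is handled simply by absorbing a factor of $k$ into the $n\geq N_k$ threshold in \cref{thm:main}, while the exponent $1/2000$ remains absolute.
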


Based on the previous discussion, combined with the works \cite{Raz98, DHVY17, HR20, GHMRZ21, GHMRZ22}, our theorem implies the following:

\begin{theorem}
	Let $\mc G$ be any 3-player game over binary questions, and arbitrary finite length answers, such that $\val(\mc G)<1$.
	Then, there exists a constant $c = c(\mc G)>0$, such that for every $n\in \mathbb{N}$, it holds that $\val(\mc G_k^{\otimes n}) \leq n^{-c}$.
\end{theorem}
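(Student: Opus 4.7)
The plan is to prove the corollary by combining \Cref{thm:intro_main} with the case analysis already carried out in \cite{GHMRZ22} for 3-player games over binary questions; all the other ingredients are existing parallel repetition theorems.

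Concretely, let $\mc G$ be a 3-player game over binary questions with $\val(\mc G)<1$, and let $S \subseteq \set{0,1}^3$ be the support of its question distribution. Up to permuting the three players and flipping individual input coordinates, \cite{GHMRZ22} shows that $S$ falls into one of the following classes, which I would handle in order. (i) $S$ is contained in a coordinate hyperplane, i.e.\ one player always receives a fixed question, so $\mc G$ is effectively a 2-player game and \cite{Raz98} yields exponential decay. (ii) The graph $\mc H_{\mc G}$ is connected, so \cite{DHVY17} gives exponential decay. (iii) $S$ is contained in the GHZ hyperplane $\set{(x,y,z) : x+y+z \equiv 0 \pmod 2}$, for which \cite{HR20, GHMRZ21} give polynomial decay. (iv) $S$ equals the playerwise connected support $\set{(x,y,z) : x+y+z \neq 2}$ or the four-point AND support $\set{(x,y,z) : z = xy}$, both of which are handled in \cite{GHMRZ22} with polynomial decay. (v) $S = \mc S = \set{(1,0,0),(0,1,0),(0,0,1)}$, the case addressed by the present paper.

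To close case (v), I would appeal to the reduction established in \cite{GHMRZ22}, which shows that a polynomial decay bound $\val(\mc G_k^{\otimes n}) \le n^{-c}$ for the family of games $\mc G_k$ in \Cref{intro:main_game_def}, with a single absolute constant $c$ \emph{independent of} $k$, already implies a polynomial decay bound $\val(\mc G^{\otimes n}) \le n^{-c'(\mc G)}$ for \emph{every} 3-player game $\mc G$ over the question set $\mc S$ with $\val(\mc G)<1$. Feeding \Cref{thm:intro_main} into this reduction finishes the argument. In every case the resulting constant $c(\mc G)>0$ depends only on $\mc G$ and not on $n$, as required.

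The main obstacle is \Cref{thm:intro_main} itself; once it is available, the corollary reduces to verifying the above dichotomy and quoting the appropriate prior theorem. A subtle but crucial point is that the constant $c$ in \Cref{thm:intro_main} must be uniform in $k$ for the reduction in case (v) to be meaningful, since games $\mc G$ over the question set $\mc S$ may use arbitrarily large answer alphabets; this uniformity is precisely what \Cref{thm:intro_main} asserts, and it is what allows the corollary to cover \emph{arbitrary} finite answer lengths rather than just small ones.
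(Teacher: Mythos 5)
Your proposal matches the paper's approach: the paper gives no detailed argument for this theorem, only citing \cite[Section 8.2]{GHMRZ22} together with \cref{thm:intro_main}, and your case enumeration is precisely the one from \cite{GHMRZ22} as summarized in the paper's introduction. One small quibble: uniformity of $c$ in $k$ is not actually essential for the reduction --- for a fixed $\mc G$ over $\mc S$ the reduction invokes $\mc G_k$ for a single $k$ determined by $\mc G$'s answer alphabet, and the final constant is allowed to depend on $\mc G$, so a $k$-dependent constant would also do --- but since \cref{thm:intro_main} does supply an absolute constant, your argument is correct as written.
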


We remark that Hazla, Holenstein and Rao \cite{HHR16} consider games over the same question set $\mc S = \set{(1,0,0), (0,1,0), (0,0,1)}$, and show barriers for proving parallel repetition bounds for such games using the \emph{forbidden subgraph method} \cite{FV02}.
Our result builds new techniques that do not fit into the above framework, and are able to bypass these barriers.
 
Next, in Section~\ref{sec:pf_overview}, we give an overview of the proof of Theorem~\ref{thm:intro_main}.
We note that our proof introduces several new ideas, which we believe are very general and can possibly extend to much larger classes of games.
For example, in one of the steps, we use Fourier Analysis over the boolean hypercube, and in particular the Level-$k$ inequalities; to the best of our knowledge, such use in the context of parallel repetition is new.

	
	\subsection{Proof Overview}\label{sec:pf_overview}
	Fix some $k\in \mathbb{N}$ and consider the $n$-fold repeated game $\mc G_k^{\otimes n}$ (see Definition~\ref{intro:main_game_def}). We'll use the term \emph{coordinate} to mean a tuple $(i,j)$ with $i\in [n]$ and $j\in [k]$, that indexes an answer for Alice or Bob. Recall that in each copy of the game $\mc{G}_k$, only the two players who receive input 0 affect the winning predicate, and we say that they are the ones who \emph{play}.
	
	The high-level intuition  is as follows: In order to win, Alice and Bob cannot both answer 0 at the same coordinate. On the other hand, suppose that they indeed only answer 0 in two fixed disjoint subsets of coordinates each of their own, then Charlie's answer in each copy of the game $\mc{G}_k$ actually reveals which player he is playing with, which is too much information for Charlie to have.
	
	We note, however, that this intuition is too simplistic and the actual proof is much more complicated, because in each coordinate only two out of the 3 players play. Nevertheless, our proof can be viewed as a rigorous execution of the intuition, by finding a large enough product event $E_1\times E_2$ on Alice's and Bob's inputs in which the above presumption holds true. More specifically, to prove by contradiction we assume that the winning probability is at least $n^{-c}$ (where $c>0$ is a small constant), and the proof is carried out in three steps:
	
	\paragraph{Remove coordinates that Alice and Bob lose (\cref{section:step1}).} We remove the coordinates where Alice and Bob both play and simultaneously output 0 with non-negligible (at least $n^{-O(c)}$) probability, by fixing their inputs and outputs in these coordinates. The fixing of outputs gives rise to the product event $E_1\times E_2$ on the remaining coordinates. We need to ensure that the probability of both $E_2$ and the winning event $W$ remain $n^{-O(c)}$, while the rounds of removal are few so that $E_1$ is also not extremely small. This is done by a potential function argument that tracks both $P(E_2|E_1)$ and $P(W|E_1,E_2)$, while the latter has higher weight than the former in the potential function. The potential function is non-decreasing, and increases by a non-negligible amount every time we exclude the losing part by fixing, thus guaranteeing the above-mentioned requirements as the probabilities cannot exceed~1.
	
	We remark that proving a similar bound with only $P(W|E_1, E_2)$ being $n^{-O(c)}$, and $P(E_1, E_2)$ being $2^{-n^{O(c)}}$ is not too hard. However for the latter part of our proof, we need that $P(E_2|E_1)$ is also at least $n^{-O(c)}$. Hence, when removing coordinates, we fix the inputs and outputs in a very delicate manner, and analyze the evolution of potential function accordingly.
	
	\paragraph{Establish independence of Alice's and Bob's answers (\cref{section:step2}).} Now that in each coordinate, Alice and Bob rarely both simultaneously output 0, we would like to strengthen the claim so that in each coordinate either Alice or Bob answers 0 with negligible probability. In other words, in each coordinate their answers are close to being independent. For a fixed coordinate, we consider Alice's output as a boolean function of her input, and the average of her output given Bob's input is exactly the sum of Fourier coefficients in the subcube where Bob receives 1. If we take average over any large event for Bob, then every Fourier coefficient, except the first one, will contribute negligibly to the result, meaning Bob answering 0 is close to being independent of Alice's answer.
	
	This is not true, of course, unless Bob receives 1 with small enough probability. Fortunately the first step does not depend on the query distribution, and therefore we can change the query distribution at the very beginning, from uniform to the one where $(0,1,0)$ has probability close to (but still polynomially larger than) $1/n$. It turns out that the change of distribution does not affect the parallel repetition property. With the right distribution, we bound the contributions of the Fourier coefficients as claimed above using Level-$k$ inequalities.
	
	\paragraph{Bound winning probability for Charlie (\cref{section:step3}).} The previous steps indicate that Alice and Bob each owns a fixed set of coordinates where they output 0 with non-negligible probabilities, and the two sets are disjoint. Now consider an input $(x,y,z)$. Among the copies of games where Charlie needs to answer (Charlie receives 0), let $G_1$ (and $G_2$) be the copies where Charlie's answer points at a coordinate that Alice (and Bob) owns. On the other hand, in each coordinate they do not own, Alice and Bob output 0 with only negligible probability, so let $B_1$ (and $B_2$) be the copies where Alice's (and Bob's) answer string contains 0 outside the coordinates they own. Note that $B_1$ depends only on $x$, $B_2$ depends only on $y$, while $G_1$ and $G_2$ depend only on $z$.
	
	In order to win, $G_1\cup B_1$ have to cover all the copies that Alice plays with Charlie, which is the 1's in $y$, and $G_2\cup B_2$ have to cover all the copies that Bob plays with Charlie, which is the 1's in $x$. But for a typical input $(x,y,z)$, where both $|x|$ and $|z|$ are close to $n/2$, $G_1$ and $B_1$ intersect with the 1's in $y$ in proportion to their sizes. That means $G_1$ has to cover almost all the copies that Charlie plays, and thus $G_2\cup B_2$ is not large enough to cover the 1's in $x$, as $G_1$ and $G_2$ are disjoint while $B_1$ and $B_2$ are negligibly small. This contradicts the fact that the winning probability is high, even conditioned on $E_1\times E_2$.
	
	\section{Preliminaries}
	
	We use $\log$ to denote the logarithm under base $2$, with the convention that $\log 0=-\infty$. Let $\mathbb{N}=\{1,2,\ldots\}$ be the set of natural numbers. For every $n\in\mathbb{N}$, let $[n]$ be the set $\{1,2,\ldots,n\}$.
	
	For every $x\in\{0,1\}^n$, $i\in[n]$ and $S\subseteq[n]$, we use $x_i\in\{0,1\}$ to denote the bit on index $i$, and $x_S\in\{0,1\}^{|S|}$ to denote the substring of $x$ on $S$. Let $\one(x)\subseteq[n]$ be the set of indices $i$ where $x_i=1$, and let $|x|=|\one(x)|$ be the Hamming weight of $x$. We also define a partial order on $\{0,1\}^n$ such that $x\geq y$ if and only if $x_i=1$ whenever $y_i=1$.
	
	For a random variable $X$, we use $\supp(X)$ to denote its support. We define a \emph{fixing} of the random variable $X$ to be an event that assigns $X$ to be some fixed value in $\supp(X)$. We equate every subset $E\subseteq\supp(X)$ to an event on $X$. We use $P(E)$ to denote the probability of an event $E$ under the distribution $P$. 
	
	\begin{lemma}[Chernoff Bounds, see {\cite{MU05}}]\label{lemma:chernoff}
		Let $X_1,\ldots, X_n \in \{0,1\}$ be independent random variables each with mean $\mu$, and let $X = \sum_{i=1}^n X_i$. Then, for all $\delta \in (0,1)$, it holds that 
		\[\Pr[X \leq (1-\delta)\mu n] \leq e^{-\frac{\delta^2 \mu n}{2}},\]
		\[\Pr[X \geq (1+\delta)\mu n] \leq e^{-\frac{\delta^2 \mu n}{3}}.\]
	\end{lemma}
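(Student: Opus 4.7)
The plan is to prove both tails by the standard moment generating function (MGF) / Bernstein--Chernoff technique. For the upper tail, I first apply Markov's inequality to $e^{tX}$ for any parameter $t>0$:
\[ \Pr[X \geq (1+\delta)\mu n] \;=\; \Pr[e^{tX} \geq e^{t(1+\delta)\mu n}] \;\leq\; \frac{\E e^{tX}}{e^{t(1+\delta)\mu n}}. \]
Because the $X_i$ are independent Bernoulli with mean $\mu$, the MGF factorizes as $\E e^{tX}=\prod_i \E e^{tX_i} = (1-\mu+\mu e^t)^n$, which by the elementary bound $1+x\leq e^x$ is at most $e^{n\mu(e^t-1)}$. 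Choosing $t=\ln(1+\delta)$ to optimize, I obtain
\[ \Pr[X \geq (1+\delta)\mu n] \;\leq\; \left(\frac{e^{\delta}}{(1+\delta)^{1+\delta}}\right)^{n\mu}. \]

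Next, I convert this into the clean exponential form stated in the lemma. Let $f(\delta)=(1+\delta)\ln(1+\delta)-\delta$; a direct check gives $f(0)=0$, $f'(0)=0$, and $f''(\delta)=1/(1+\delta)\geq 1/2$ on $[0,1]$. However a slightly more careful estimate, comparing Taylor coefficients of $f(\delta)$ to those of $\delta^2/3$, yields $f(\delta)\geq \delta^2/3$ for all $\delta\in(0,1)$. Exponentiating gives the desired upper-tail bound
\[ \Pr[X \geq (1+\delta)\mu n] \;\leq\; e^{-\delta^2\mu n/3}. \]

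For the lower tail, I apply the same MGF approach with a negative parameter $t<0$. The analogous optimization via $t=\ln(1-\delta)$ produces
\[ \Pr[X \leq (1-\delta)\mu n] \;\leq\; \left(\frac{e^{-\delta}}{(1-\delta)^{1-\delta}}\right)^{n\mu}, \]
and the slightly better constant $1/2$ in the exponent (as opposed to $1/3$ in the upper tail) follows from the inequality $(1-\delta)\ln(1-\delta)\geq -\delta+\delta^2/2$ for $\delta\in(0,1)$, verifiable by checking that the derivative of the difference is nonnegative on $[0,1]$.

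Since this is a textbook statement cited directly from \cite{MU05}, there is no conceptual obstacle; the only care required is verifying that the precise constants $1/2$ and $1/3$ come out of the final elementary inequalities, which is routine.
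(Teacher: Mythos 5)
The paper states this lemma without proof, citing it directly from the textbook \cite{MU05}. Your MGF-based derivation is the standard proof — indeed it is essentially the one given in \cite{MU05} — and the elementary inequalities you invoke to extract the clean exponents, namely $(1+\delta)\ln(1+\delta)-\delta\ge \delta^2/3$ and $(1-\delta)\ln(1-\delta)\ge -\delta+\delta^2/2$ on $(0,1)$, both check out, so the argument is correct.
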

	
	\begin{lemma}\label{lemma:potential}
	    Let $P$ be a distribution and $A,B$ be two events such that $P(A\wedge B)>0$.
	    Let $X$ be a random variable with finite support, and let $\mathcal X = \{x: P(X=x | B)>0\}$, and let $x_0\in \mathcal X$ be a fixed element such that $P(X=x_0)\geq \delta$.
	    
	    For each $x\in \mathcal X$, we define $\Phi(x) = \log P(A|B,X=x)+\frac{1}{2}\log P(B|X=x)$, and let $\Phi = \log P(A|B)+\frac{1}{2}\log P(B) < 0$.
	    
	    Then, for every $0<\varepsilon<1$, it holds that either
		\[
			\Phi(x_0) \geq \Phi -\varepsilon,
		\]
		or
		\[
		    P\left(X\in \mathcal X\  \land\  \Phi(X)\geq \Phi + \frac{1}{8}\delta \varepsilon \right) \geq 2^{2 \Phi}\cdot\frac{1}{4}\delta^2\varepsilon  
		\]
	\end{lemma}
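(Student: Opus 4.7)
The plan is to prove the contrapositive: assuming $\Phi(x_0) < \Phi - \varepsilon$, I would recover the second conclusion via a Cauchy--Schwarz estimate together with a careful splitting of $\mathcal{X}$ into a high and a low level set of $\Phi$.

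The key preliminary step is to set $f(x) := P(A \mid B, X=x)$ and $g(x) := P(B \mid X=x)$, so that $2^{\Phi(x)} = f(x)\sqrt{g(x)}$ for $x \in \mathcal{X}$ and likewise $2^\Phi = P(A\mid B)\sqrt{P(B)}$. The workhorse of the proof is the ``Cauchy--Schwarz lower bound''
\[
\sum_{x \in \mathcal{X}} P(X=x)\cdot 2^{2\Phi(x)} \ \geq\ 2^{2\Phi},
\]
which I would establish by expanding $P(A \wedge B) = \sum_{x \in \mathcal{X}} P(X=x)\cdot \bigl(f(x)\sqrt{g(x)}\bigr) \cdot \sqrt{g(x)}$, squaring and applying Cauchy--Schwarz to obtain $P(A \wedge B)^2 \leq \bigl(\sum_x P(X=x) f(x)^2 g(x)\bigr) \cdot P(B)$, and dividing by $P(B)$.

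With this in hand, argue by contradiction. Set $\alpha := \tfrac{1}{8}\delta \varepsilon$ and suppose both $\Phi(x_0) < \Phi - \varepsilon$ and $P(X \in \mathcal{X} \wedge \Phi(X) \geq \Phi + \alpha) < \tfrac{1}{4}\delta^2\varepsilon \cdot 2^{2\Phi}$. Since $\Phi(x_0) < \Phi + \alpha$, the point $x_0$ lies in the ``low'' set $L := \{x \in \mathcal{X} : \Phi(x) < \Phi + \alpha\}$, whose complement $H$ has small probability by assumption. Using $2^{2\Phi(x)} = f(x)^2 g(x) \leq 1$ for $x \in \mathcal{X}$, and $2^{2\Phi(x)} < 2^{2\Phi+2\alpha}$ for $x \in L$, while isolating the contribution of $x_0$, one obtains
\[
\sum_{x \in \mathcal{X}} P(X=x) \cdot 2^{2\Phi(x)} \ \leq\ P(X \in H) \ +\ P(X=x_0) \cdot 2^{2\Phi(x_0)} \ +\ \bigl(1 - P(X=x_0)\bigr) \cdot 2^{2\Phi+2\alpha}.
\]
This upper bound is linear and strictly decreasing in $P(X=x_0)$ (its slope $2^{2\Phi(x_0)} - 2^{2\Phi+2\alpha}$ is negative), so the worst case is at $P(X=x_0) = \delta$.

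Substituting $2^{2\Phi(x_0)} < 2^{2\Phi-2\varepsilon}$ and $P(X \in H) < \tfrac{1}{4}\delta^2 \varepsilon \cdot 2^{2\Phi}$, and applying the chord bounds $2^{-2\varepsilon} \leq 1 - \tfrac{3}{4}\varepsilon$ and $2^{2\alpha} \leq 1 + 2\alpha = 1 + \tfrac{1}{4}\delta\varepsilon$ (both valid by convexity of $t \mapsto 2^t$ on $[0,1]$), a short expansion in which the $\pm \tfrac{1}{4}\delta^2\varepsilon$ terms cancel yields
\[
\sum_{x \in \mathcal{X}} P(X=x) \cdot 2^{2\Phi(x)} \ <\ 2^{2\Phi} \bigl( 1 - \tfrac{1}{2}\delta\varepsilon \bigr) \ <\ 2^{2\Phi},
\]
contradicting the Cauchy--Schwarz lower bound above. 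The main ``obstacle'' is really just a calibration issue rather than any depth: the gap $\alpha = \delta\varepsilon/8$ and the probability floor $\tfrac{1}{4}\delta^2\varepsilon \cdot 2^{2\Phi}$ must be chosen so that the first-order expansions of $2^{-2\varepsilon}$ (weighted by $\delta$) and $2^{2\alpha}$ (weighted by $1-\delta$) leave a linear surplus in $\delta\varepsilon$, and the particular constants $1/8$ and $1/4$ are precisely what makes the cancellation work.
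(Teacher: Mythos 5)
Your proof is correct and is essentially the same argument as the paper's: both hinge on the lower bound $\sum_x P(X=x)\,2^{2\Phi(x)} \geq 2^{2\Phi}$ (you obtain it via Cauchy--Schwarz, the paper via Jensen, which are interchangeable here) combined with chord bounds on $t\mapsto 2^t$. The only difference is organizational — you argue by contradiction, splitting $\mathcal X$ into a high set $H$ and a low set $L$, whereas the paper directly lower-bounds the probability of the subset $\mathcal X_1\subseteq H$ of points with $2^{2\Phi(x)}\geq 2^{2\Phi}(1+\delta\varepsilon/4)$; the resulting algebra is the same after cancellation.
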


	\begin{proof}
		By Jensen's inequality, we have
		\begin{align*}
			P(A|B)^2\cdot P(B) & = 
			   \left(\sum_{x\in\mc{X}}P(X=x|B)\cdot P(A|B, X=x)\right)^2\cdot P(B) \\
			 & \leq \sum_{x\in\mc{X}}P(X=x|B)\cdot P(A|B, X=x)^2\cdot P(B) \\
			 & = \sum_{x\in\mc{X}}P(X=x)\cdot P(A|B,X=x)^2\cdot P(B|X=x).
		\end{align*}
		Suppose that $\Phi(x_0)<\Phi-\varepsilon$, which implies that
		\begin{align*}
			P(A|B,X=x_0)^2\cdot P(B|X=x_0) &<P(A|B)^2\cdot P(B)\cdot 2^{-2\varepsilon} \\
			&\leq P(A|B)^2\cdot P(B)\cdot(1-\varepsilon/4).
		\end{align*}
		On the other hand, since $\delta,\varepsilon\leq 1$ we have $\log(1+\delta\varepsilon/4)\geq \delta\varepsilon/4$, and thus in order to satisfy $\Phi(x)\geq\Phi+\frac{1}{8}\delta\varepsilon$ it suffices to have 
		\begin{equation}\label{eq}
		    P(A|B,X=x)^2\cdot P(B|X=x) \geq P(A|B)^2\cdot P(B)\cdot (1+\delta\varepsilon/4).
		\end{equation}
		Let $\mc{X}_1\subset\mc{X}$ be the set of $x\in\mc{X}$, $x\neq x_0$ that satisfies \eqref{eq}. Since $P(A|B,X=x)^2\cdot P(B|X=x)\leq 1$, we have
		\begin{align*}
		    &P\left(X\in \mathcal X\  \land\  \Phi(X)\geq \Phi + \frac{1}{8}\delta \varepsilon \right) \\
		    \geq\ & \sum_{x\in\mc{X}_1} P(X=x)\cdot P(A|B,X=x)^2\cdot P(B|X=x) \\
		    =\ & \sum_{x\in\mc{X}}P(X=x)\cdot P(A|B,X=x)^2\cdot P(B|X=x) 
		     - P(X=x_0)\cdot P(A|B,X=x_0)^2\cdot P(B|X=x_0) \\
		    & - \sum_{x\notin \mc{X}_1, x\neq x_0}P(X=x)\cdot P(A|B,X=x)^2\cdot P(B|X=x) \\
			\geq\ & P(A|B)^2\cdot P(B)\big[1-P(X=x_0)\cdot(1-\varepsilon/4)-P(X\neq x_0)\cdot (1+\delta\varepsilon/4)\big] \\
			\geq\ & P(A|B)^2\cdot P(B)\cdot \frac{1}{4}\delta^2\varepsilon. \qedhere
		\end{align*}
	\end{proof}
	
	\subsection{Fourier Analysis}
	For every $x,y\in\{0,1\}^n$, let $x\cdot y$ be their inner product in $\mathbb{Z}$. Given a function $f:\{0,1\}^n\rightarrow\mathbb{R}$, let $\widehat{f}:\{0,1\}^n\rightarrow\mathbb{R}$ be its Fourier coefficients, defined as
	\[\widehat{f}(u)=\frac{1}{2^n}\sum_{x\in\{0,1\}^n}(-1)^{x\cdot u}f(x).\]
	We will use the following equation on the sum of the Fourier coefficients in a subcube, which follows from Plancherel's theorem: For every $y\in\{0,1\}^n$, we have
	\[\sum_{u\leq y}\widehat{f}(u)=\frac{1}{2^{n-|y|}}\sum_{x\cdot y=0}f(x).\]
	We will also use the following version of the Level-$k$ inequality:
	\begin{lemma}\label{lemma:level}
		Let $f:\{0,1\}^n\rightarrow\{-1,0,1\}$ be a function with $\frac{1}{2^n}\sum_x|f(x)|=\alpha$. Then for every $k\in\mathbb{N}$,
		\[\sum_{|u|=k}|\widehat{f}(u)|\leq (2en\cdot \max\{1,\ln(1/\alpha)\})^{k/2}\cdot\alpha.\]
	\end{lemma}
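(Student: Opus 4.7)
The plan is to prove this bound via the standard Level-$k$ strategy based on Bonami-Beckner hypercontractivity, combined with the structural observation that $f$ being $\{-1,0,1\}$-valued pins down its $L_q$-norms in terms of $\alpha$ alone.

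First I would note that since $f(x)\in\{-1,0,1\}$, we have $|f|^q=|f|$ for every $q\geq 1$, so $\|f\|_q=\alpha^{1/q}$. Write $f^{=k}=\sum_{|u|=k}\widehat{f}(u)\chi_u$ for the level-$k$ projection, where $\chi_u(x)=(-1)^{x\cdot u}$. Combining Plancherel with Hölder for conjugate exponents $q,q'$ (with $q\geq 2$) gives
\[
\|f^{=k}\|_2^2=\langle f^{=k},f\rangle\leq\|f^{=k}\|_q\cdot\|f\|_{q'}=\|f^{=k}\|_q\cdot\alpha^{1-1/q}.
\]
Applying the Bonami-Beckner inequality $\|f^{=k}\|_q\leq(q-1)^{k/2}\|f^{=k}\|_2$ and cancelling a factor of $\|f^{=k}\|_2$ yields
\[
\|f^{=k}\|_2\leq(q-1)^{k/2}\cdot\alpha^{1-1/q}.
\]

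Next I would pass from the $L_2$-bound on the level-$k$ part to the claimed $\ell_1$-bound on the coefficients by Cauchy-Schwarz, using $\binom{n}{k}\leq n^k$:
\[
\sum_{|u|=k}|\widehat{f}(u)|\leq\sqrt{\binom{n}{k}}\cdot\|f^{=k}\|_2\leq n^{k/2}(q-1)^{k/2}\alpha^{1-1/q}.
\]
It remains to pick $q$ optimally, splitting on whether $\ln(1/\alpha)\geq k$. In the small-$\alpha$ regime $\ln(1/\alpha)\geq k$, I would set $q=2\ln(1/\alpha)/k\geq 2$; then $\alpha^{-1/q}=e^{k/2}$ and $(q-1)^{k/2}\leq(2\ln(1/\alpha)/k)^{k/2}$, giving $(2en\ln(1/\alpha)/k)^{k/2}\alpha\leq(2en\ln(1/\alpha))^{k/2}\alpha$. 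In the remaining regime $\ln(1/\alpha)<k$, I would take $q=2$ and obtain the trivial $\|f^{=k}\|_2\leq\sqrt{\alpha}$, hence $n^{k/2}\sqrt{\alpha}$; since $\alpha>e^{-k}$ here, we get $(2e)^{k/2}\sqrt{\alpha}\geq 2^{k/2}\geq 1$, so $n^{k/2}\sqrt{\alpha}\leq(2en)^{k/2}\alpha\leq(2en\cdot\max\{1,\ln(1/\alpha)\})^{k/2}\alpha$.

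The only nontrivial ingredient is Bonami-Beckner, which is classical; everything else is Hölder, Cauchy-Schwarz, and the two-case optimization in $q$. I do not anticipate a real obstacle beyond the bookkeeping of the case split, and a sanity check that the weaker form stated in the lemma (with $n$ instead of the sharper $n/k$) absorbs the loss from $\binom{n}{k}\leq n^k$.
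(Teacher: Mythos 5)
Your proof is correct, and at its core follows the same route as the paper: both arguments pass from $\sum_{|u|=k}|\widehat f(u)|$ to the level-$k$ Fourier weight $\sum_{|u|=k}\widehat f(u)^2$ via Cauchy--Schwarz and the bound $\binom{n}{k}\leq n^k$, and both then split into the regime where the Level-$k$ inequality is effective versus a ``large $\alpha$'' regime handled by the trivial Parseval bound $\sum_u\widehat f(u)^2=\alpha$. The difference is that the paper invokes the Level-$k$ inequality from O'Donnell's book as a black box in the first case (the paper's cutoff is $k\leq 2\ln(1/\alpha)$, yours is $k\leq\ln(1/\alpha)$; both work), whereas you reprove it self-containedly via $\langle f^{=k},f\rangle\leq\|f^{=k}\|_q\|f\|_{q'}$, Bonami--Beckner hypercontractivity $\|f^{=k}\|_q\leq(q-1)^{k/2}\|f^{=k}\|_2$, and the observation that $\|f\|_{q'}=\alpha^{1/q'}$ for $\{-1,0,1\}$-valued $f$ — which is exactly the standard proof of the Level-$k$ inequality, and also explains cleanly why the $\{-1,0,1\}$-valued case (rather than Boolean indicator) goes through with the same constants. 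The choice $q=2\ln(1/\alpha)/k$ and the constant bookkeeping in both cases check out. So your version is not a genuinely different approach, but it is a useful unwinding of the citation and makes the lemma self-contained.
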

	\begin{proof}
		Since there are at most $n^k$ many $u$ with $|u|=k$, we have
		\[\sum_{|u|=k}|\widehat{f}(u)|\leq n^{k/2}\sqrt{\sum_{|u|=k}\widehat{f}(u)^2}.\]
		Therefore it suffices to prove that
		\[\sum_{|u|=k}\widehat{f}(u)^2\leq (2e\cdot \max\{1,\ln(1/\alpha)\})^k\cdot\alpha^2.\]
		When $k\leq 2\ln(1/\alpha)$, it follows from the original Level-$k$ inequality (see \cite[Section~9.5]{Odo14}). When $k>2\ln(1/\alpha)$, we also have
		\[\sum_{u\in\{0,1\}^n}\widehat{f}(u)^2=\frac{1}{2^n}\sum_{x\in\{0,1\}^n} f(x)^2=\alpha\leq e^k\alpha^2\leq (2e\cdot \max\{1,\ln(1/\alpha)\})^k\cdot\alpha^2. \qedhere\]
	\end{proof}
	
	\subsection{Multi-player Games}
	
	The notations we use here follows mostly from \cite{GHMRZ22}.
	
	\begin{definition} (Multiplayer Game)
		A $k$-player game $\mc{G}$ is a tuple $\mc{G}=(\mc{X},\mc{A},Q,V)$, where the question set $\mc{X} = \mc{X}^1 \times\cdots\times \mc{X}^k$, and the answer set $\mc{A}= \mc{A}^1 \times\cdots\times \mc{A}^k$ are finite sets, $Q$ is a probability distribution over $\mc{X}$, and $V:\mc{X}\times \mc{A}\to\{0,1\}$ is a predicate.
	\end{definition}
	
	\begin{definition} (Game Value)
		Let $\mc{G}=(\mc{X},\mc{A},Q,V)$ be a $k$-player game. The value $\val(\mc{G})$ of the game $\mc{G}$ is defined as 
		\[\val(\mc{G}) = \max_{f^1,\ldots,f^k}\ \Pr_{X\sim Q}\Big(V(X,(f^1(X^1),\ldots,f^k(X^k)))=1\Big),\]
		where the maximum is over all sequence of functions $\left(f^j:\mc{X}^j\rightarrow\mc{A}^j\right)_{j\in[k]}$, which we call player strategies.
	\end{definition}
	
	We note that the value of the game is unchanged even if we allow the player strategies to be randomized, that is, we allow the strategies to depend on some additional shared and private randomness. 
	
	\begin{definition} (Parallel Repetition of a game)
		Let $\mc G = (\mc X, \mc A, Q, V)$ be a $k$-player game. We define its $n$-fold repetition as $\mc G^{\otimes n} = (\mc X^{\otimes n}, \mc A^{\otimes n}, P, V^{\otimes n})$.
		The sets $\mc X^{\otimes n}$ and $\mc A^{\otimes n}$ are defined to be the $n$-fold product of the sets $\mc X$ and $\mc A$ with themselves respectively.
		The distribution $P$ is the $n$-fold product of the distribution $Q$ with itself, that is, $P(X=x) = \prod_{i=1}^n Q(X_i=x_i)$.
		The predicate $V^{\otimes n}$ is defined as $V^{\otimes n}(x, a) = \bigwedge_{i=1}^n V(x_i, a_i)$.
	\end{definition}
	
	In this paper we mostly deal with 3-player games, and we use the notation $\mc{G}=(\mc{X}\times\mc{Y}\times\mc{Z},\mc{A}\times\mc{B}\times\mc{C},Q,V)$. That is, we use $\mc{X},\mc{Y},\mc{Z}$ in places of $\mc{X}^1,\mc{X}^2,\mc{X}^3$ and use $\mc{A},\mc{B},\mc{C}$ in places of $\mc{A}^1,\mc{A}^2,\mc{A}^3$. We also refer to the three players as Alice, Bob and Charlie.
	
	The proof of the following useful lemma is essentially the same as Lemma 3.14 in \cite{GHMRZ22}.
	\begin{lemma}\label{lemma:change}
		Let $\mc{G}_1=(\mc{X},\mc{A},Q_1,V)$ and $\mc{G}_2=(\mc{X},\mc{A},Q_2,V)$ be two multi-player games where only the distributions are different. Let $\lambda\in[0,1]$ be such that for every $x\in\mc{X}$, $Q_1(X=x)\geq\lambda Q_2(X=x)$. Then for every $n\in\mathbb{N}$, it holds that
		\[\val(\mc{G}_1^{\otimes n})\leq e^{-\lambda n/8}+\val(\mc{G}_2^{\otimes \lfloor \lambda n/2\rfloor}).\]
	\end{lemma}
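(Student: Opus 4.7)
The plan is to prove the lemma by a strategy-reduction: I would convert an optimal strategy for $\mc{G}_1^{\otimes n}$ into a strategy for $\mc{G}_2^{\otimes m}$ with $m=\lfloor\lambda n/2\rfloor$, losing only the additive $e^{-\lambda n/8}$ term. The starting observation is that the hypothesis $Q_1(x)\geq\lambda Q_2(x)$ for every $x$ lets us write $Q_1=\lambda Q_2+(1-\lambda)R$, where $R(x)=\tfrac{Q_1(x)-\lambda Q_2(x)}{1-\lambda}$ is a valid probability distribution. Hence a sample from $Q_1$ can be drawn by flipping an independent $\lambda$-biased coin and then sampling from $Q_2$ or $R$ accordingly; the coin can be implemented via shared randomness to produce the correct joint distribution over all players.

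Given the optimal strategy for $\mc{G}_1^{\otimes n}$, define a strategy for $\mc{G}_2^{\otimes m}$ as follows. Upon receiving questions $y=(y_1,\ldots,y_m)\sim Q_2^{\otimes m}$, the players use shared randomness to sample $c_1,\ldots,c_n$ i.i.d.\ from $\mathrm{Bernoulli}(\lambda)$, set $T=\{i:c_i=1\}$, and (if $|T|\geq m$) also sample a uniformly random injection $\sigma:[m]\to T$. They then synthesize a question vector $x=(x_1,\ldots,x_n)$ for $\mc{G}_1^{\otimes n}$ by placing $y_j$ at coordinate $\sigma(j)$ for each $j$, freshly sampling from $Q_2$ at the other coordinates in $T$, and freshly sampling from $R$ at coordinates outside $T$ (the latter two via shared randomness, so that player marginals are jointly correct). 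They run the given strategy on $x$ to obtain $a_1,\ldots,a_n$ and output $a_{\sigma(j)}$ as the answer to the $j$-th copy of $\mc{G}_2$ (outputting arbitrary answers if $|T|<m$).

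The analysis rests on the claim that the synthetic $x$ is distributed as $Q_1^{\otimes n}$. Conditional on the coins and $\sigma$, the coordinates outside $T$ are independent $R$-samples and those inside $T$ are i.i.d.\ $Q_2$-samples (the externally-provided $y_j$'s are exchangeable with fresh internal $Q_2$-samples because $y$ is drawn independently of the shared randomness). Marginalizing over the coins then gives $x\sim Q_1^{\otimes n}$. Hence the $\mc{G}_1^{\otimes n}$-strategy wins all $n$ coordinates with probability exactly $\val(\mc{G}_1^{\otimes n})$, and on the intersection of this event with $\{|T|\geq m\}$, the embedding guarantees $V(y_j,a_{\sigma(j)})=1$ for every $j$, i.e., the $\mc{G}_2^{\otimes m}$-strategy wins.

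To conclude, I would apply \cref{lemma:chernoff} with $\mu=\lambda$, $\delta=1/2$ to get $\Pr[|T|<m]\leq\Pr[|T|\leq\lambda n/2]\leq e^{-\lambda n/8}$, and a union bound then yields
\[\val(\mc{G}_2^{\otimes m})\;\geq\;\val(\mc{G}_1^{\otimes n})-e^{-\lambda n/8},\]
which rearranges to the claimed inequality. The main subtlety in this plan is the distributional verification in the previous paragraph: one must check that the mixture of externally-sampled $y$-coordinates and internally-sampled coordinates, together with the random embedding $\sigma$, truly reproduces the product distribution $Q_1^{\otimes n}$. Once this is in place the rest is routine.
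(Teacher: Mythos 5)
Your proof is correct and uses essentially the same mixture-decomposition-plus-Bernoulli-coin idea as the paper (which conditions on the coin vector $Z$, hands it to the players for free, and observes the conditional value is at most $\val(\mc{G}_2^{\otimes|Z|})$). You phrase it as an explicit strategy embedding from $\mc{G}_2^{\otimes m}$ into $\mc{G}_1^{\otimes n}$ via shared randomness, which is the same reduction viewed from the other direction; the distributional check you flag at the end does go through exactly as you describe.
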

    \begin{proof}
	Notice that we can write $Q_1 = \lambda Q_2 + (1-\lambda)Q'$ for some distribution $Q'$ over $\mc X$. 
	Let $Z = (Z_1,\dots, Z_n) \in \set{0,1}^n$ be i.i.d. Bernoulli random variables such that for each $i\in [n]$, independently, $Z_i$ is 1 with probability $\lambda$ and 0 with probability $1-\lambda$.
	For each $i\in[n]$, we think of the $i$-th copy of $Q_1$ as depending on $Z_i$: if $Z_i=1$ then $Q_1$ is drawn from $Q_2$, otherwise $Q_1$ is drawn from $Q'$.
	
	In order to bound the value of the game $\mc{G}_1^{\otimes n}$, we can assume that each of the players is also given $Z$ as input, since this can only increase the game's value. Observe that conditioned on the event $Z=z$ for any fixed value $z\in \set{0,1}^n$, the value of the game is at most the value of $\mc{G}_2^{\otimes |z|}$. Thus we have
	\begin{align*}
		\val(\mc G^{\otimes n}) & \leq \sum_{m=0}^n\Pr[|Z| = m]\cdot \val(\mc{G}_2^{\otimes m})
		\\& \leq \Pr\left[|Z|\leq \frac{\lambda n}{2}\right]\cdot 1 + 1\cdot \val(\mc{G}_2^{\otimes \lfloor \lambda n/2\rfloor})
		\\& \leq e^{-\lambda n/8} + \val(\mc{G}_2^{\otimes \lfloor \lambda n/2\rfloor}). \qedhere
	\end{align*}
    \end{proof}
	
	\section{Main Results}
	
	\begin{definition}\label{def:game}
		Let $U$ be the uniform distribution over $\mc{S}=\{(1,0,0),(0,1,0),(0,0,1)\}$. For every $k\in\mathbb{N}$ and every distribution $Q$ over $\mc{S}$, we define a 3-player game $\mc{G}_k(Q)=(\mc{X}\times\mc{Y}\times\mc{Z},\mc{A}_k\times\mc{B}_k\times\mc{C}_k,Q,V_k)$ with $\mc{X}=\mc{Y}=\mc{Z}=\{0,1\}$ as follows:
		\begin{itemize}
			\item[(a)] $\mc{A}_k=\mc{B}_k=\{0,1\}^k$ and $\mc{C}_k=[k]$.
			\item[(b)] For all $(x,y,z)\in\mc{S}$ and $(a,b,c)\in\mc{A}_k\times\mc{B}_k\times\mc{C}_k$,
			\[V_k((x,y,z),(a,b,c))=\left\{\begin{array}{ll}
			b_c = 0, & \textrm{ if }(x,y,z)=(1,0,0) \\
			a_c = 0, & \textrm{ if }(x,y,z)=(0,1,0) \\
			\forall i\in[k], a_i+b_i\geq 1,  & \textrm{ if }(x,y,z)=(0,0,1)
			\end{array}\right.\]
		\end{itemize}
	\end{definition}
	
	\begin{theorem}\label{thm:main}
		For every $k\in\mathbb{N}$, there exists $N_k\in\mathbb{N}$ such that for every $n\in\mathbb{N}, n\geq N_k$, it holds that $\val(\mc{G}_k(U)^{\otimes n})\leq n^{-1/2000}$.
	\end{theorem}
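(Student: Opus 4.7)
The plan is to argue by contradiction: assume $\val(\mc{G}_k(U)^{\otimes n}) > n^{-1/2000}$ for arbitrarily large $n$, with achieving strategies $(f^1,f^2,f^3)$, and derive a contradiction. Before the main argument, I would apply \cref{lemma:change} to swap the uniform distribution $U$ for a skewed distribution $Q$ on $\mc S$ under which $Q(\{(0,1,0)\}) \approx n^{-1/2+o(1)}$ (while keeping $Q \geq \lambda U$ pointwise for some $\lambda \approx n^{-1/2+o(1)}$). This transfers the assumed polynomial lower bound to parallel repetitions of $\mc{G}_k(Q)$, and the skew is what will eventually make the Level-$k$ inequality applicable. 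Throughout I write $W$ for the winning event, $(X,Y,Z)$ for the three input strings, and call $(i,j)\in[n]\times[k]$ a \emph{coordinate} (indexing an output bit of Alice or Bob).

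Step 1 (\cref{section:step1}) constructs a product event $E_1\times E_2$ on $(X,Y)$ with $P(E_1)$, $P(E_2\mid E_1)$, $P(W\mid E_1,E_2) \geq n^{-O(1/2000)}$, such that no surviving coordinate $(i,j)$ has $P(a_{i,j}=0\wedge b_{i,j}=0\mid E_1, E_2) \geq n^{-c}$ for a chosen small constant $c$ (when Alice and Bob both play copy $i$). I would build $E_1,E_2$ iteratively: whenever some coordinate violates the threshold, fix Alice's and Bob's inputs and outputs there using the assignment produced by \cref{lemma:potential} applied with $A=W$, $B=E_1\wedge E_2$, and an appropriate fixing variable. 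Each round increases the potential $\Phi := \log P(W\mid E_1,E_2) + \tfrac{1}{2}\log P(E_2\mid E_1)$ by a definite amount; since $\Phi \leq 0$ the loop terminates after $n^{o(1)}$ rounds. The weight $\tfrac{1}{2}$ on $\log P(E_2\mid E_1)$ is what keeps \emph{both} $P(W\mid E_1,E_2)$ and $P(E_2\mid E_1)$ individually polynomially large --- this dual preservation is essential for Step 2.

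Step 2 (\cref{section:step2}) is the heart of the proof: show that after Step 1, in every surviving coordinate exactly one of Alice or Bob outputs $0$ with non-negligible probability, and their outputting-$0$ events are essentially independent under $P(\cdot\mid E_1,E_2)$. Fix a coordinate $(i,j)$. After Step 1 Alice's output bit is a deterministic function $f:\{0,1\}^n\to\{0,1\}$ of her input $X$ (since $E_1$ depends only on $X$). The subcube-Plancherel identity from the preliminaries expresses $\mathbb{E}[f(X)\mid Y=y]$ as $\sum_{u\leq y}\widehat{f}(u)$ (after reducing to the uniform prior on $X$ and absorbing the change of measure into $f$); averaging over $Y\in E_2$, only $\widehat{f}(\emptyset) = P(a_{i,j}=0\mid E_1)$ is uncancellable, and the task reduces to bounding $\sum_{k\geq 1}\sum_{|u|=k}\widehat{f}(u)\cdot\mathbb{E}_{Y\mid E_2}[(-1)^{Y\cdot u}]$. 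I would combine \cref{lemma:level} (controlling the Fourier mass at level $k$ in terms of $\alpha = P(a_{i,j}=0\mid E_1)$) with the bound that each character expectation is at most $n^{-k/2+o(k)}$ (since $Y_i$ has marginal bias $\approx n^{-1/2+o(1)}$ under $Q$, and the conditioning on $E_2$ inflates this by at most $n^{O(1/2000)}$). The main obstacle is parameter tuning: the Level-$k$ bound pays $(\sqrt{n}\log(1/\alpha))^{k/2}$ while each character contributes $n^{-k/2+o(k)}$, so the target marginal in $Q$, the exponent $1/2000$, and the polynomial slack from Step 1 must all be calibrated so that the geometric series in $k$ converges with exponent margin to spare after subtracting $\widehat{f}(\emptyset)$.

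Step 3 (\cref{section:step3}) closes by a combinatorial counting argument. Step 2 assigns each coordinate an \emph{owner} --- Alice, Bob, or dead --- with Alice-owned and Bob-owned coordinates forming disjoint subsets of $[n]\times[k]$. For any triple $(X,Y,Z)$, let $G_1, G_2 \subseteq \one(Z)$ be the copies $i$ with $z_i=1$ in which Charlie's pointer $c_i$ lands on an Alice-owned or Bob-owned coordinate of copy $i$ respectively, and let $B_1, B_2$ be the sets of copies in which Alice or Bob outputs $0$ at a non-owned coordinate. Winning forces $\one(Y) \subseteq G_1\cup B_1$ and $\one(X) \subseteq G_2\cup B_2$. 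But by Step 2 and Markov, $|B_1|, |B_2| = o(n)$ with high probability; \cref{lemma:chernoff} applied to a typical $(X,Y,Z)$ drawn under $Q^{\otimes n}\mid E_1\times E_2$ shows $|X|,|Y|,|Z|$ are all concentrated and that $G_1, G_2$ intersect $\one(Y), \one(X)$ proportionally to their sizes. Since $G_1, G_2$ are disjoint subsets of $\one(Z)$, one of them is forced to be too small to cover its target, contradicting $P(W\mid E_1,E_2) \geq n^{-O(1/2000)}$ and completing the proof.
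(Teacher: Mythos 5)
Your three-step skeleton --- distribution change, potential-function coordinate removal, Fourier/Level-$k$ independence, counting via $G_1,G_2,B_1,B_2$ --- matches the paper's structure. But there is a concrete calibration error in Step~2 that would make the argument fail as stated.

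You propose $Q(\{(0,1,0)\}) \approx n^{-1/2+o(1)}$, justified by the claim that the Level-$k$ inequality ``pays $(\sqrt{n}\log(1/\alpha))^{k/2}$'' while each event $P(Y\geq u)$ for $|u|=k$ contributes $n^{-k/2+o(k)}$, so that the series telescopes. That accounting is off by roughly $n^{k/4}$: the correct bound on the level-$k$ $\ell_1$-mass of a $\{-1,0,1\}$-valued function with density $\alpha$ is
\[\sum_{|u|=k}|\widehat{a}(u)| \;\leq\; \big(O(n)\cdot\max\{1,\ln(1/\alpha)\}\big)^{k/2}\cdot\alpha,\]
(this is \cref{lemma:level}, obtained by Cauchy--Schwarz from the $\ell_2$ Level-$k$ bound, which inflates by $\sqrt{\binom{n}{k}}\approx n^{k/2}$, not $n^{k/4}$). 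With $p := P(Y_i=1) \approx n^{-1/2}$, the level-$k$ term is then
\[\big(O(n)\ln(1/\alpha)\big)^{k/2}\cdot\alpha\cdot p^k \;=\; \big(O(\ln(1/\alpha))\big)^{k/2}\cdot\alpha,\]
which does not decay in $k$ even if you had $\alpha \geq n^{-O(1)}$. The paper resolves this by taking $p = n^{-1+100c}$ with $c=1/1000$ --- much closer to $1/n$ than to $1/\sqrt{n}$ --- so that the geometric ratio becomes $p^2 n\ln(1/\alpha)\approx n^{-1+330c}\ll 1$. The exponent $1/2000$ is squeezed precisely to make room for this: all of $100c$, $130c$ (the exponent in $\ln(1/\alpha)$), and the $1/2$ from the Level-$k$ Cauchy--Schwarz must fit inside the budget $-1+\cdots<-1/2$.

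Relatedly, your Step~1 claim $P(E_1)\geq n^{-O(1/2000)}$ is not what the construction gives. The potential $\Phi = \log P(W\mid E_1,E_2,F) + \tfrac12\log P(E_2\mid E_1,F)$ protects only $P(W\mid E_1,E_2,F)$ and $P(E_2\mid E_1,F)$; the number of iterations can be as large as $n^{\Theta(c)}$, each of which can cost a polynomial factor in $P(E_1\mid F)$, so the paper only gets $P(E_1\mid F) \geq e^{-n^{O(c)}}$. This weaker bound on $\alpha$ is exactly why the Level-$k$ factor $\ln(1/\alpha)$ can be as large as $n^{O(c)}$ --- and another reason $p$ must be nearly $1/n$ rather than $n^{-1/2}$ for the series to converge. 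Finally, two small slips: for \cref{lemma:change} you need $U\geq \lambda Q$ pointwise (not $Q\geq\lambda U$), and the correct $\lambda$ is a constant (e.g.\ $2/3$), not $n^{-1/2+o(1)}$; a sub-constant $\lambda$ is neither needed nor what the lemma requires, and is presumably a conflation with the bias $p$.
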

	
	Based on the results in \cite[Section 8.2]{GHMRZ22} and our discussions in the introduction, \cref{thm:main} implies the following bound on the parallel repetitions of 3-player games with binary inputs:
	\begin{theorem}
		Let $\mc{G}=(\mc{X}\times\mc{Y}\times\mc{Z},\mc{A}\times\mc{B}\times\mc{C},Q,V)$ be any 3-player game with $\mc{X}=\mc{Y}=\mc{Z}=\{0,1\}$, and such that $\val(\mc{G})<1$. Then there exists a constant $c=c(\mc{G})>0$ such that for every $n\in\mathbb{N}$, it holds that $\val(\mc{G}^{\otimes n})\leq n^{-c}$.
	\end{theorem}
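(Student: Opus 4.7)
The plan is to reduce the theorem to \cref{thm:main} by a case analysis on the support $\mc{S}_Q \subseteq \{0,1\}^3$ of the query distribution $Q$, following the framework of \cite{GHMRZ22}. Up to relabeling the three players and flipping individual bits, every non-empty subset of $\{0,1\}^3$ falls into one of the following classes: (i) the support induces a game that is ``essentially 2-player'' (e.g.\ only two players receive non-constant inputs), or the question graph $\mc H_\mc G$ is connected; (ii) $\mc{S}_Q \subseteq \{(x,y,z):x+y+z=0\bmod 2\}$, the GHZ set; (iii) $\mc{S}_Q$ induces a \emph{playerwise connected} game (which covers the case $x+y+z\neq 2$); (iv) $\mc{S}_Q$ is (a sub-distribution of) the four-point AND distribution $\{(x,y,z): z=xy\}$; or (v) $\mc{S}_Q = \mc{S} = \{(1,0,0),(0,1,0),(0,0,1)\}$ is the hamming-weight-one set. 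I would first carry out the enumeration, verifying that these five classes indeed exhaust all cases, exactly as in \cite[Section~8]{GHMRZ22}.

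For cases (i)--(iv), I would simply invoke the known polynomial (or exponential) decay bounds: \cite{Raz98, DHVY17} for case~(i), \cite{HR20, GHMRZ21} for case~(ii), \cite{GHMRZ22} for cases~(iii) and~(iv). Each gives a constant $c=c(\mc G)>0$ such that $\val(\mc G^{\otimes n})\leq n^{-c}$. For case~(v), which is the focus of this paper, I would combine \cref{thm:main} with the reduction from \cite[Section~8.2]{GHMRZ22}. That reduction shows that if $\mc G$ is any 3-player game on the question set $\mc S$ with arbitrary answer alphabets and $\val(\mc G) < 1$, then there exists a finite $k = k(\mc G)$ and a constant $\eta=\eta(\mc G)>0$ such that any strategy in $\mc G^{\otimes n}$ of winning probability $p$ translates into a strategy in $\mc G_k(\cdot)^{\otimes n}$ of winning probability at least $\eta \cdot p^{O(1)}$ (the key point being that winning the covering-type predicate of $\mc G_k$ suffices to win the original predicate). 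Plugging in our \cref{thm:main} then yields the desired polynomial bound.

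Two small technical points must be addressed in case~(v). First, \cref{thm:main} is stated only for the uniform distribution $U$ on $\mc S$, whereas the given $Q$ may assign arbitrary non-zero probabilities to the three points of $\mc S$. This is handled by \cref{lemma:change}: since all three probabilities $Q(x)$ are positive constants depending only on $\mc G$, we can take $\lambda=\min_{x\in\mc S}Q(x)/U(x) > 0$, so that $\val(\mc G(Q)^{\otimes n}) \leq e^{-\lambda n/8} + \val(\mc G(U)^{\otimes \lfloor\lambda n/2\rfloor})$, reducing the arbitrary-$Q$ case to the uniform case with only a constant-factor loss in the exponent $c$. Second, the reduction of \cite{GHMRZ22} assumes answers are in a finite alphabet, which is given in the hypothesis. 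I would then conclude by taking $c(\mc G)$ to be the minimum of the constants produced by the five cases.

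The step I expect to require the most care in write-up is verifying case~(v)'s reduction: one needs to check that producing a winning strategy for $\mc G$ (with its arbitrary predicate $V$) out of a winning strategy for $\mc G_k$ preserves the polynomial form $n^{-c}$ rather than degrading to something weaker, and that the constants $k$ and $\eta$ can in fact be read off from $\mc G$. However, since this reduction is precisely what is established in \cite[Section~8.2]{GHMRZ22} and is cited in the excerpt just before the statement of the theorem, the bulk of the work is already done, and what remains is the bookkeeping of constants across the case analysis.
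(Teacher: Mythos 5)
Your proposal matches the paper's intended argument: the paper proves this theorem by simply citing \cite[Section~8.2]{GHMRZ22} together with the case analysis sketched in the introduction, invoking \cite{Raz98,DHVY17,HR20,GHMRZ21,GHMRZ22} for the non-hamming-weight-one supports and combining \cref{thm:main} with the GHMRZ22 reduction for the remaining case. Your proposal is correct and, if anything, more explicit than the paper about the distribution-change step via \cref{lemma:change} and about the structure of the reduction.
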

	
	The rest of our paper is devoted to proving \cref{thm:main}. 
	
	\subsection{Change the distribution}
	
	In order to prove \cref{thm:main}, from now on we assume $\val(\mc{G}_k(U)^{\otimes n_1})\geq n_1^{-1/2000}$ for some large enough $n_1\in\mathbb{N}$, and eventually derive a contradiction. The first thing to do is changing the distribution so that Bob gets input $1$ with small probability.
	
	\begin{definition}\label{def:distr}
		Let $n=\lfloor n_1/3\rfloor$ and $c=1/1000$. Let $Q$ be the distribution over $\mc{S}$ such that $(0,1,0)$ has probability $n^{-1+100c}$, while $(1,0,0)$ and $(0,0,1)$ both have probability $\frac{1}{2}-\frac{1}{2}n^{-1+100c}$ each.
	\end{definition}
	
	\begin{claim}\label{claim:main}
		 $\val(\mc{G}_k(Q)^{\otimes n})\geq n^{-c}$.
	\end{claim}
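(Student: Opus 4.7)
The plan is to apply Lemma~\ref{lemma:change} in the direction that reduces the uniform-distribution game to the $Q$-distribution game. Concretely, I would take $\mc G_1 = \mc G_k(U)$, $\mc G_2 = \mc G_k(Q)$, and look for the largest $\lambda \in [0,1]$ such that $U(x) \geq \lambda\, Q(x)$ for every $x \in \mc S$. Inspecting Definition~\ref{def:distr}, on the two atoms $(1,0,0)$ and $(0,0,1)$ we have $U(x)/Q(x) = (1/3)/(1/2 - n^{-1+100c}/2) \geq 2/3$ (for $n$ large), while on the atom $(0,1,0)$ we have $U(x)/Q(x) = (1/3)/n^{-1+100c} = n^{1-100c}/3 \gg 2/3$. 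So $\lambda = 2/3$ is a valid choice, and crucially $\lfloor \lambda n_1/2 \rfloor = \lfloor n_1/3\rfloor = n$, which is exactly how $n$ was defined.

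Applying Lemma~\ref{lemma:change} with these parameters then yields
\[
\val(\mc G_k(U)^{\otimes n_1}) \;\leq\; e^{-n_1/12} \;+\; \val(\mc G_k(Q)^{\otimes n}).
\]
Combining this with the standing assumption $\val(\mc G_k(U)^{\otimes n_1}) \geq n_1^{-1/2000}$, I would rearrange to get
\[
\val(\mc G_k(Q)^{\otimes n}) \;\geq\; n_1^{-1/2000} - e^{-n_1/12}.
\]

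It then remains to compare the right-hand side against the target $n^{-c} = n^{-1/1000}$. Since $n_1 \leq 3n+2 \leq 5n$ for $n \geq 1$, we have $n_1^{-1/2000} \geq 5^{-1/2000}\, n^{-1/2000}$, and the ratio of the two relevant quantities, namely $n^{-1/2000}/n^{-1/1000} = n^{1/2000}$, grows unboundedly, while $e^{-n_1/12}$ decays exponentially. Thus for all sufficiently large $n_1$ (equivalently large $n$), both error terms are swamped and we obtain $\val(\mc G_k(Q)^{\otimes n}) \geq n^{-c}$.

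There is no real obstacle here: the entire claim is a direct application of the change-of-distribution lemma, and the definitions of $n$ and of the tilted distribution $Q$ in Definition~\ref{def:distr} have been rigged precisely so that $\lambda = 2/3$ gives the clean reduction $n_1 \mapsto n = \lfloor n_1/3\rfloor$, with a polynomial gap ($n^{-1/2000}$ versus $n^{-1/1000}$) large enough to absorb the $e^{-\Omega(n)}$ loss from the lemma.
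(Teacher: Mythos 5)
Your proposal is correct and is essentially the same argument as the paper's: take $\lambda = 2/3$, check $U \geq \lambda Q$ pointwise (automatic since $Q(x) < 1/2$ on each atom for large $n$), apply \cref{lemma:change} with $\lfloor \lambda n_1/2\rfloor = \lfloor n_1/3\rfloor = n$, and absorb the exponentially small error term into the polynomial slack between $n_1^{-1/2000}$ and $n^{-1/1000}$.
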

	\begin{proof}
		Let $\lambda=2/3$, and thus we have $1/3\geq\lambda Q((X,Y,Z)=(x,y,z))$ for all $(x,y,z)\in\mc{S}$. Applying \cref{lemma:change} on $\mc{G}_k(U)$ and $\mc{G}_k(Q)$ gives
		\begin{align*}
			\val(\mc{G}_k(Q)^{\otimes \lfloor n_1/3\rfloor})
			 &\geq \val(\mc{G}_k(U)^{\otimes n_1})-e^{-\lambda n_1/8} \\
			 &\geq n_1^{-c/2}-e^{-n/4} \geq n^{-c}. \qedhere
		\end{align*}
	\end{proof}

	Let $P$ be the distribution $Q^{\otimes n}$, and let $(X,Y,Z)\in\mc{S}^n$ be the random variables that represent the inputs to the three players under distribution $P$. Let $f,g:\{0,1\}^n\rightarrow\{0,1\}^{n\times k}$ and $h:\{0,1\}^n\rightarrow[k]^n$ be strategies that achieve the value $\val(\mc{G}_k(Q)^{\otimes n})$, and let $W$ be the event that $(f,g,h)$ wins on the inputs $(X,Y,Z)$, so that we have $P(W)\geq n^{-c}$.
	
	\section{Remove Coordinates with $(0,0)$ Answers}\label{section:step1}
	
	\begin{lemma}\label{lemma:event}
		There exist $S\subseteq[n]$, a fixing $F$ of $(X_S,Y_S,Z_S)$, and two events $E_1\subseteq \mc{X}^{\otimes n}, E_2\subseteq \mc{Y}^{\otimes n}$ for Alice and Bob respectively, such that the following holds:
		\begin{itemize}
			\item[(a)] $|S|\leq n^{28c}$ and $P(E_1|F)\geq e^{-n^{30c}}$. 
			\item[(b)] $P(E_2|E_1,F)\geq n^{-2c}$ and $P(W|E_1,E_2,F)\geq n^{-c}$.
			\item[(c)] For every $i\notin S$ and $j\in[k]$, it holds that
			 \[P((X_i,Y_i,Z_i)=(0,0,1)\wedge f_{i,j}(X)=0\wedge g_{i,j}(Y)=0|E_1,E_2,F)\leq n^{-7c}.\]
		\end{itemize}
	\end{lemma}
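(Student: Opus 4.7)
My plan is to build $(S, F, E_1, E_2)$ by a greedy iterative procedure, tracking the potential $\Phi := \log P(W \mid E_1, E_2, F) + \tfrac12 \log P(E_2 \mid E_1, F)$ appearing in \cref{lemma:potential}. I initialize with $S = \emptyset$, trivial $F$, $E_1 = \mc{X}^{\otimes n}$, $E_2 = \mc{Y}^{\otimes n}$; by \cref{claim:main} this gives $\Phi \geq -c\log n$. In the outer loop, if condition~(c) already holds I output $(S, F, E_1, E_2)$; otherwise I pick any $(i, j)$ with $i\notin S$ violating (c) and apply one update step.

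For the update step, the key observation is that the ``bad event'' $B_{i,j}:=\{(X_i, Y_i, Z_i)=(0,0,1)\wedge f_{i,j}(X)=g_{i,j}(Y)=0\}$ is disjoint from $W$, since winning on input $(0,0,1)$ requires $f_{i,j}(X)\vee g_{i,j}(Y)=1$. I introduce a random variable $\xi$ that partitions $(X_i, Y_i, Z_i, f_{i,j}(X), g_{i,j}(Y))$ into five values: the bad value $\xi = \textsf{bad}$, corresponding to $B_{i,j}$, and four ``good'' values each corresponding to a product-event extension of $(F, E_1, E_2)$ that kills $B_{i,j}$---namely fixing $(X_i, Y_i, Z_i)\in\{(1,0,0),(0,1,0)\}$ (leaving $E_1, E_2$ untouched), or fixing $(X_i, Y_i, Z_i)=(0,0,1)$ and either adjoining $\{f_{i,j}(X) = 1\}$ to $E_1$ (``$\textsf{A}$''), or adjoining $\{f_{i,j}(X)=0\}$ to $E_1$ and $\{g_{i,j}(Y)=1\}$ to $E_2$ (``$\textsf{B}$''). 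By construction, \cref{lemma:potential}'s per-value $\Phi(x)$ for $x\in\{(1,0,0),(0,1,0),\textsf{A},\textsf{B}\}$ coincides with the updated potential after the corresponding extension. I then apply \cref{lemma:potential} with base measure $P(\cdot\mid E_1, F)$, $A=W$, $B=E_2$, $X=\xi$, and $x_0 = \textsf{bad}$. Because $P(W\mid E_1, E_2, F, B_{i,j}) = 0$ forces $\Phi(\textsf{bad})=-\infty$, the ``default'' branch of the lemma fails for every $\varepsilon > 0$, so the ``alternative'' branch yields a good $x$ with $\Phi(x)\geq \Phi + \delta\varepsilon/8$, where $\delta := P(\xi=\textsf{bad}\mid E_1, F) \geq P(B_{i,j}\mid E_1, E_2, F)\cdot P(E_2\mid E_1, F) \geq n^{-7c}\cdot n^{-2c}$ using the maintained invariant $P(E_2 \mid E_1, F)\geq n^{-2c}$. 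Taking $\varepsilon$ a small absolute constant, each iteration strictly increases $\Phi$ by $\Omega(n^{-9c})$, so $\Phi$ is in fact non-decreasing across the run.

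Accounting: Since $\Phi \leq 0$ and starts at $\geq -c\log n$ with $\Omega(n^{-9c})$ gain per iteration, the total number of iterations is $O(n^{9c}\log n) \ll n^{28c}$, so $|S|\leq n^{28c}$. Because $\Phi\geq -c\log n$ throughout and both summands of $\Phi$ are non-positive, we have $P(W\mid E_1, E_2, F)\geq n^{-c}$ and $P(E_2\mid E_1, F)\geq n^{-2c}$, establishing (b). For the lower bound on $P(E_1\mid F)$ in (a), I argue that each iteration multiplies $P(E_1\mid F)$ by at least $n^{-O(c)}$: the lemma's probability guarantee gives $P(\xi=x\mid E_1, F) \geq 2^{2\Phi}\delta^2\varepsilon/16 \geq n^{-O(c)}$ for the chosen good value $x$ (the $16$ absorbs the choice among four good values), which bounds from below the combination of (i) the factor from extending $F$ by fixing $(X_i, Y_i, Z_i)$ (using $P((X_i,Y_i,Z_i)=v) = \Theta(1)$ or $\Theta(n^{-1+100c})$ under $Q$) and (ii) the factor from adjoining $\{f_{i,j}(X) = v\}$ to $E_1$ when $x\in\{\textsf{A},\textsf{B}\}$. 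Over $O(n^{9c}\log n)$ iterations this yields $P(E_1\mid F)\geq e^{-O(n^{9c}\log^2 n)} \gg e^{-n^{30c}}$.

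The main obstacle will be the careful bookkeeping: verifying rigorously that \cref{lemma:potential}'s per-value $\Phi(x)$ coincides with the updated potential after the product-event extension indexed by $x$ (this uses that conditioning on $\xi=x$ is the same as conditioning on the specific joint event in $(X_i, Y_i, Z_i, f_{i,j}(X), g_{i,j}(Y))$ encoded by $x$), and handling the correlation between $X$ and $Y$ induced by the coupled joint distribution on $\mc{S}$ in each coordinate, which affects both the lower bound $\delta \geq n^{-9c}$ and the per-iteration decay of $P(E_1\mid F)$. Getting the various constants and exponents to fit together with the stated $n^{28c}, n^{30c}, n^{-7c}, n^{-2c}, n^{-c}$ parameters requires tracking these factors carefully, but the arithmetic leaves significant slack.
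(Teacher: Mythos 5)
Your one-shot reduction has a genuine gap at the value $\xi = \textsf{B}$. In \cref{lemma:potential}, $\Phi(x)$ is defined as $\log P(A \mid B, X=x) + \tfrac12 \log P(B \mid X=x)$ with $A=W$, $B=E_2$, under the base measure $P(\cdot\mid E_1,F)$. For $x\in\{(1,0,0),(0,1,0),\textsf{A}\}$ the conditioning event $\{\xi=x\}$ places no constraint on $Y$ beyond the fixing of $(X_i,Y_i,Z_i)$, and your identification of $\Phi(x)$ with the updated potential is correct. But $\{\xi=\textsf{B}\}$ additionally contains $\{g_{i,j}(Y)=1\}$, a constraint on $Y$. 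Writing $F'=F\wedge\{(X_i,Y_i,Z_i)=(0,0,1)\}$, $E_1'=E_1\wedge\{f_{i,j}(X)=0\}$, $E_2'=E_2\wedge\{g_{i,j}(Y)=1\}$, one has
\[
\Phi(\textsf{B}) \;=\; \log P(W\mid E_1',E_2',F') + \tfrac12\log P\bigl(E_2\mid E_1',F',g_{i,j}(Y)=1\bigr)
\;=\; \Phi(E_1',E_2',F') \;-\; \tfrac12\log P\bigl(g_{i,j}(Y)=1\mid E_1',F'\bigr),
\]
and the subtracted term is $\le 0$, so $\Phi(\textsf{B}) \ge \Phi(E_1',E_2',F')$ with no control on the gap: $\Phi(\textsf{B})$ systematically \emph{overestimates} the true post-update potential, and the factor $P(g_{i,j}(Y)=1\mid E_1',F')$ has no a priori lower bound (it is not even conditioned on $E_2$). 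Consequently, when the alternative branch of \cref{lemma:potential} hands you only $\textsf{B}$, the guarantee $\Phi(\textsf{B})\geq\Phi+\tfrac18\delta\varepsilon$ does not imply the potential increases, and the monotonicity you rely on for both the iteration count and the $P(E_1\mid F)$ accounting breaks down.

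The paper sidesteps this by never applying \cref{lemma:potential} with a test variable that depends on $Y$. It applies the lemma twice per iteration, first on $(X_i,Y_i,Z_i)$ with a tiny $\varepsilon=n^{-18c}$ and then on $f_{i,j}(X)$ with $\varepsilon=n^{-8c}$, accepting a small controlled potential loss in the default branch of each. Only then does it tighten $E_2$ to $E_2\wedge\{g_{i,j}(Y)=1\}$, and that last update is analyzed directly (not through \cref{lemma:potential}): using that $W$ forces $g_{i,j}(Y)=1$ once $F'$ and $E_1'$ are in place, the update yields $\Delta\Phi = -\tfrac12\log P(g_{i,j}(Y)=1\mid E_1',E_2,F')\geq \tfrac12 n^{-7c}$, which strictly dominates the two small losses. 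To fix your argument you would essentially have to split $\textsf{B}$ out of the single $\xi$-application and handle the $E_2$-tightening as a separate, direct step, which recovers the paper's structure.
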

	\begin{proof}
		Initially let $S=\varnothing$ and $E_1=\mc{X}^{\otimes n},E_2=\mc{Y}^{\otimes n}$. We iterate the process described below to update $S,F,E_1$ and $E_2$ until requirement (c) is met. During the process, we examine the potential function
		\begin{align*}
			\Phi(E_1,E_2,F) &=\log P(W|E_1,E_2,F)+\frac{1}{2}\log P(E_2|E_1,F) \\
			&= \log P(W,E_2|E_1,F)-\frac{1}{2}\log P(E_2|E_1,F),
		\end{align*}
		and ensure that the potential function $\Phi(E_1,E_2,F)$ strictly increases for each iteration. Notice that initially we have \[\Phi(E_1,E_2,F)=\log P(W)\geq -c\log n.\]
		And as long as $\Phi(E_1,E_2,F)\geq -c\log n$, requirement (b) is always satisfied.
		\begin{enumerate}
			\item Let $i\notin S,j\in[k]$ be a coordinate such that requirement (c) is violated, that is
			\[P((X_i,Y_i,Z_i)=(0,0,1)\wedge f_{i,j}(X)=0\wedge g_{i,j}(Y)=0|E_1,E_2,F)> n^{-7c},\]
			which, with the help of requirement (b), implies that
			\begin{equation}\label{eq:3}
				P((X_i,Y_i,Z_i)=(0,0,1)|E_1,F) > n^{-9c},
			\end{equation}
			\begin{equation}\label{eq:4}
				P(f_{i,j}(X)=0|E_1,F,(X_i,Y_i,Z_i)=(0,0,1)) > n^{-9c},
			\end{equation}
			\begin{equation}\label{eq:5}
				P(g_{i,j}(Y)=0|E_1,E_2,F,(X_i,Y_i,Z_i)=(0,0,1),f_{i,j}(X)=0) > n^{-7c}.
			\end{equation}
			Add $i$ to the set $S$. The process stops if no such coordinate $(i,j)$ exists.
			
			\item Apply \cref{lemma:potential} on $(X_i,Y_i,Z_i)$ over the distribution $P$ conditioned on $E_1\wedge F$, with $\varepsilon=n^{-18c}$ and $\delta=n^{-9c}$. Since $P((X_i,Y_i,Z_i)=(0,0,1)|E_1,E_2, F) > 0$, by \eqref{eq:3} we have either
			\[\Phi(E_1,E_2,F\wedge (X_i,Y_i,Z_i)=(0,0,1))\geq \Phi(E_1,E_2,F)-n^{-18c},\]
    		in which case we update $F$ to $F\wedge (X_i,Y_i,Z_i)=(0,0,1)$ and proceed to step 3; Or there exists $(x,y,z)\in\{(1,0,0),(0,1,0)\}$ such that $P((X_i, Y_i, Z_i) = (x,y,z) | E_1, E_2, F) > 0$, and
			\[\Phi(E_1,E_2,F\wedge (X_i,Y_i,Z_i)=(x,y,z))\geq \Phi(E_1,E_2,F)+\frac{1}{8}n^{-27c},\]
			\[P((X_i,Y_i,Z_i)=(x,y,z)|E_1,F)\geq 2^{2\Phi(E_1,E_2,F)}\cdot \frac{1}{8}n^{-36c},\]
			in which case we update $F$ to $F\wedge (X_i,Y_i,Z_i)=(x,y,z)$ and iterate back from step 1.
			
			\item Apply \cref{lemma:potential} on $f_{i,j}(X)$ over the distribution $P$ conditioned on $E_1\wedge F$, with $\varepsilon=n^{-8c}$ and $\delta=n^{-9c}$. Since $P(f_{i,j}(X)=0|E_1,E_2,F) > 0$, by \eqref{eq:4} we have either
			\[\Phi(E_1\wedge f_{i,j}(X)=0,E_2,F)\geq \Phi(E_1,E_2,F)-n^{-8c},\]
			in which case we update $E_1$ to $E_1\wedge f_i(X)=0$ and proceed to step 4; Or we have $P(f_{i,j}(X)=1|E_1,E_2, F) > 0$, and 
			\[\Phi(E_1\wedge f_{i,j}(X)=1,E_2,F)\geq \Phi(E_1,E_2,F)+\frac{1}{8}n^{-17c},\]
			\[P(f_{i,j}(X)=1|E_1,F)\geq 2^{2\Phi(E_1,E_2,F)}\cdot \frac{1}{4}n^{-26c},\]
			in which case we update $E_1$ to $E_1\wedge f_i(X)=1$ and iterate back from step 1.
			
			\item Update $E_2$ to $E_2\wedge g_{i,j}(Y)=1$ and iterate back from step 1. Now that $F$ implies $(X_i,Y_i,Z_i)=(0,0,1)$ and $E_1$ implies $f_{i,j}(X)=0$, by the definition of the game (\cref{def:game}) we know that $W$ implies $g_{i,j}(Y)=1$. Therefore, by \eqref{eq:5}, the increment of potential function in this step is
			\begin{align*}
				&\Phi(E_1,E_2\wedge g_{i,j}(Y)=1,F)-\Phi(E_1,E_2,F)\\
				=\ & \frac{1}{2}\log P(E_2|E_1,F)-\frac{1}{2}\log P(E_2\wedge g_{i,j}(Y)=1|E_1,F) \\
				=\ &-\frac{1}{2}\log P(g_{i,j}(Y)=1|E_1,E_2,F) \\
				\geq\ & \frac{1}{2}P(g_{i,j}(Y)=0|E_1,E_2,F)\\
				\geq\ & \frac{1}{2}n^{-7c}.
			\end{align*}
		\end{enumerate}
		
		Depending on the choices, in each iteration the potential function increases by at least either $\frac{1}{8}n^{-27c}$, or $\frac{1}{8}n^{-17c}-n^{-18c}$, or $\frac{1}{2}n^{-7c}-n^{-8c}-n^{-18c}$, which are all lower bounded by $\frac{1}{8}n^{-27c}$. This means that the potential function is indeed strictly increasing in each iteration, and thus requirement (b) is met. Since it always holds $\Phi(E_1,E_2,F)\leq 0$, this also means that the process will eventually stop, and the total number of iterations is at most $8n^{27c}\cdot c\log n$. In other words, $|S|\leq 8n^{27c}\cdot c\log n\leq n^{28c}$.
		
		Finally, in order to bound $P(E_1|F)$, we prove in below that $P(E_1|F)$ gets multiplied by at least a factor of $n^{-70c}$ in each iteration. Since initially $P(E_1|F)=1$, this implies that eventually after at most $n^{28c}$ iterations, we have $P(E_1|F)\geq (n^{-70c})^{n^{28c}}\geq e^{-n^{30c}}$. In each iteration, when $F$ gets updated to $F\wedge (X_i,Y_i,Z_i)=(x,y,z)$ for some $(x,y,z)\in\mc{S}$, $P(E_1|F)$ changes by a factor of 
		\begin{align*}
			\frac{P(E_1|F,(X_i,Y_i,Z_i)=(x,y,z))}{P(E_1|F)}
			& \geq  P((X_i,Y_i,Z_i)=(x,y,z)|E_1,F) \\
			& \geq \min\left\{n^{-9c}, 2^{2\Phi(E_1,E_2,F)}\cdot \frac{1}{8}n^{-36c}\right\} \\
			& \geq \frac{1}{8}n^{-38c}.
		\end{align*}
		The last line is because $\Phi(E_1,E_2,F)\geq-c\log n$. Furthermore, if step 3 is executed and $E_1$ gets updated to $E_1\wedge f_{i,j}(X)=b$ for some $b\in\{0,1\}$, $P(E_1|F)$ further changes by a factor of 
		\begin{align*}
			\frac{P(E_1\wedge f_{i,j}(X)=b|F)}{P(E_1|F)}
			& = P(f_{i,j}(X)=b|E_1,F) \\
			& \geq \min\left\{n^{-9c}, 2^{2\Phi(E_1,E_2,F)}\cdot \frac{1}{4}n^{-26c}\right\} \\
			& \geq \frac{1}{8}n^{-30c}.
		\end{align*}
		The last line is because at step 3, $\Phi(E_1,E_2,F)\geq-c\log n-n^{-18c}\geq -2c\log n$. Note that in step 4 only $E_2$ changes and $P(E_1|F)$ does not change. So overall, $P(E_1|F)$ changes by a factor of at least $\frac{1}{8}n^{-38c}\cdot \frac{1}{8}n^{-30c}\geq n^{-70c}$.
	\end{proof}

	Notice that the fixing $F$ is independent of the remaining inputs in $[n]\setminus S$. For the rest of the paper, we change $W$ to the event that $(f,g,h)$ wins the copies of $\mc{G}_k(Q)$ in $[n]\setminus S$, and change $E_1,E_2,f,g,h$ to their relevant restrictions to the copies in $[n]\setminus S$, under the fixing $F$. Since $|S|\leq n^{28c}=o(n)$, by also changing $c$ from $\frac{1}{1000}$ to $\frac{1}{1000}\cdot \frac{\log n}{\log(n-|S|)}<\frac{1}{999}$, we can safely assume that $S=\varnothing$ and remove $F$ from the probability conditions, while the distribution $Q$ remains the same and \cref{lemma:event} still holds. This significantly simplifies the discussions later on.

	\section{Almost Independence of Answers in each Coordinate}\label{section:step2}
	
	Let $E_1,E_2$ be specified as in the previous section. In this section, we prove the following lemma:
	\begin{lemma}\label{lemma:ind}
		For every $i\in [n]$ and $j\in[k]$, at least one of the following holds:
		\[P(X_i=0\wedge f_{i,j}(X)=0|E_1,E_2)\leq n^{-3c},\]
		or
		\[P(Y_i=0\wedge g_{i,j}(Y)=0|E_1,E_2)\leq n^{-3c}.\]
	\end{lemma}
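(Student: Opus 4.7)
The plan is to proceed by contradiction: suppose both $p_A := P(X_i = 0 \wedge f_{i,j}(X) = 0 \mid E_1, E_2) > n^{-3c}$ and $p_B := P(Y_i = 0 \wedge g_{i,j}(Y) = 0 \mid E_1, E_2) > n^{-3c}$, and let $A$, $B$ denote these two events. Since $A \wedge B$ forces $(X_i, Y_i, Z_i) = (0,0,1)$, it suffices to derive $P(A \wedge B \mid E_1, E_2) > n^{-7c}$ to contradict condition~(c) of \cref{lemma:event}.

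The main tool will be Fourier analysis on Alice's input space $\{0,1\}^n$ under the uniform distribution. The crucial structural observation is that conditional on $Y = y$, the variable $X$ is uniformly distributed on the subcube $\{x : x \cdot y = 0\}$: indeed $Y_i = 1$ forces $X_i = 0$, while $Y_i = 0$ makes $(X_i, Z_i)$ uniform on $\{(1,0), (0,1)\}$ since $Q((1,0,0)) = Q((0,0,1))$. Hence by the identity from the preliminaries, $E[H(X) \mid Y = y] = \sum_{u \leq y} \widehat{H}(u)$ for any $H : \{0,1\}^n \to \mathbb{R}$. Setting $H = \mathbf{1}[A \cap E_1]$ and $K = \mathbf{1}[B \cap E_2]$, I expand
$$P(A, B, E_1, E_2) \;=\; E[H(X)\, K(Y)] \;=\; \sum_u \widehat{H}(u) \cdot P\bigl(K \wedge u \leq Y\bigr),$$
whose leading $u = 0$ term equals $\widehat{H}(0) \cdot P(K) = P^U(A \cap E_1) \cdot P(B \cap E_2)$. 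Under \cref{def:distr}, $Y_i \sim \mathrm{Bern}(p)$ with $p = n^{-1+100c}$, so $P(u \leq Y) = p^{|u|}$; combining with the Level-$k$ inequality (\cref{lemma:level}) bounds the sum over $|u| \geq 1$ by $\alpha \cdot O(n^{-1/2 + 150c})$, where $\alpha := \widehat{H}(0)$, provided $\ln(1/\alpha) = O(n^{100c})$.

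The bound $\alpha \geq e^{-O(n^{100c})}$ follows from $p_A > n^{-3c}$, the bounds $P(E_1) \geq e^{-n^{30c}}$ and $P(E_2 \mid E_1) \geq n^{-2c}$ from \cref{lemma:event}, and the fact that the biased and uniform product measures on $X$ differ by a multiplicative factor at most $e^{O(pn)} = e^{O(n^{100c})}$. Performing the analogous Fourier expansions for $P(A, E_1, E_2)$, $P(B, E_1, E_2)$, and $P(E_1, E_2)$ (replacing $H$ by $\mathbf{1}[E_1]$ or $K$ by $\mathbf{1}[E_2]$ as appropriate) and chaining the four approximate identities yields
$$P(A, B \mid E_1, E_2) \;\approx\; P^U(A \mid E_1) \cdot P(B \mid E_2) \;\approx\; p_A \cdot p_B \;>\; n^{-6c} \;\gg\; n^{-7c},$$
the desired contradiction. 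The hard part will be controlling the error when the denominators $P(E_2)$ or $P(B \cap E_2)$ are themselves as small as $e^{-n^{30c}} \ll n^{-1/2 + 150c}$, since a naive additive-error Fourier bound then blows up in the relative sense. To remedy this I plan to carry out the argument conditional on $E_1$, where $P(E_2 \mid E_1) \geq n^{-2c}$ is no longer negligible; the resulting reweighting of the $Y$-marginal by the factor $P(E_1 \mid Y)/P(E_1)$ is itself close to $1$ on typical $y$, which can be shown by a second application of Level-$k$ to $\mathbf{1}[E_1]$, and the clean subcube identity for $E[H \mid Y = y]$ is preserved since it only used the raw query distribution $Q$.
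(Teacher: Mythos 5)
Your approach is a genuinely different decomposition from the paper's, and (after the error accounting is tightened) it does close, so I'll compare the two.

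The paper avoids ever computing $P(A\wedge B\mid E_1,E_2)$. It packages $A$ relative to $E_1$ into a single signed function $a:\{0,1\}^n\to\{-1,0,1\}$ (with $a=-1$ on $A\cap E_1$, $a=+1$ on $E_1\setminus A$, $a=0$ off $E_1$) and proves one lemma, \cref{col:fourier}: for every event $E$ on $Y$ with $P(E\mid E_1)>0$, $\E[a\mid E_1,E]$ is within $n^{-1/4}/P(E\mid E_1)$ of the fixed constant $\widehat a(0^n)/\alpha$. Applying this once to the event $E=E_2\wedge Y_i=0\wedge g_{i,j}(Y)=0$ (where condition~(c) of \cref{lemma:event} forces $\E[a\mid E_1,E]\geq 1-2n^{-4c}$) and once to $E=E_2$ gives $\E[a\mid E_1,E_2]\geq 1-2n^{-3c}$, i.e.\ $P(A\mid E_1,E_2)\leq n^{-3c}$, in one line. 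The crucial mechanism is that the additive Fourier errors in both $\E[a\mid E]$ (from \cref{lemma:a}) and $P(E_1\mid E)$ (from \cref{lemma:b}) are each proportional to $\alpha=P^U(E_1)$, and since $P(E_1)\geq(1-n^{-1/3})\alpha$ (\cref{lemma:b} at $E=\mc Y^{\otimes n}$), those $\alpha$'s cancel in the ratio, leaving an error that depends only on $P(E\mid E_1)\geq n^{-\mathrm{poly}(c)}$.

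Your plan expands four joint probabilities $P(A,B,E_1,E_2)$, $P(A,E_1,E_2)$, $P(B,E_1,E_2)$, $P(E_1,E_2)$, with leading terms $\alpha_H\gamma$, $\alpha_H\beta$, $\alpha\gamma$, $\alpha\beta$ (where $\alpha_H=\widehat{\one[A\cap E_1]}(0)$, $\beta=P(E_2)$, $\gamma=P(B\cap E_2)$), and chains them to get $P(A,B\mid E_1,E_2)\approx p_Ap_B>n^{-6c}$, contradicting condition~(c). That chain is sound, and the same $\alpha$-cancellation rescues it: the Level-$k$ error on each expansion is additively $O(\alpha n^{-1/2+150c})$, while $P(E_1,E_2)\geq\tfrac12\alpha n^{-2c}$, so \emph{relative to} $P(E_1,E_2)$ every error is $O(n^{-1/2+152c})$, negligible next to $n^{-6c}$. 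Your instinct to condition on $E_1$ is exactly right; however, the specific mechanism you describe --- showing the reweighting factor $P(E_1\mid Y=y)/P(E_1)$ is close to $1$ on typical $y$ --- is a pointwise statement that Level-$k$ does not give you (high Fourier mass could concentrate on coefficients $\widehat{\one[E_1]}(u)$ with $u\leq y$ for adversarial $y$). You don't need it; what you need, and what Level-$k$ does give, is the on-average statement $P(E_1)=\alpha(1\pm O(n^{-1/3}))$, which is precisely the paper's \cref{lemma:b}. Once you replace the pointwise claim with that, and track the four additive errors against $P(E_1,E_2)\geq\tfrac12\alpha n^{-2c}$, your proof goes through. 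The paper's single-signed-function formulation is cleaner because it bundles all of this into one corollary applied twice.
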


	We prove the above lemma using Fourier analysis. Fix some $i\in[n]$ and $j\in[k]$. Define $a:\{0,1\}^n\rightarrow\{-1,0,1\}$ over the inputs of Alice as follows: For every $x\in\{0,1\}^n$,
		\[a(x)=\left\{\begin{array}{ll}
			0 & \textrm{ if }x \notin E_1, \\
			-1 & \textrm{ if }x \in E_1\textrm{ and }x_i=0\textrm{ and }f_{i,j}(x)=0, \\
			1 & \textrm{ otherwise,}
		\end{array}\right.\]
	and let $b(x)=|a(x)|$. Let $\alpha=\frac{1}{2^n}\sum_x b(x)=\widehat{b}(0^n)$.
	
	\begin{proposition}
		$\alpha\geq e^{-n^{130c}}$.
	\end{proposition}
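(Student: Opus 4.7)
The plan is to unwind the definition of $\alpha$ into a statement about the uniform measure of $E_1$, and then relate this to $P(E_1)$ (which we already control from \cref{lemma:event}(a)) by comparing the uniform distribution on $\{0,1\}^n$ to the product distribution governing $X$ under $P$.

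First I would observe that by construction, $b(x)=|a(x)|$ is simply the indicator $\one[x\in E_1]$, so
\[
\alpha \;=\; \frac{1}{2^n}\sum_{x\in\{0,1\}^n} b(x) \;=\; \frac{|E_1|}{2^n} \;=\; \Pr_{x\sim U_n}[x\in E_1],
\]
where $U_n$ is the uniform distribution on $\{0,1\}^n$. On the other hand, we have from \cref{lemma:event}(a) (after the simplification absorbing $F$) that $P(E_1)\geq e^{-n^{30c}}$, where $P$ restricted to $X$ is a product distribution with each $X_i$ independently equal to $1$ with probability $\frac{1}{2}-\frac{1}{2}n^{-1+100c}$ and equal to $0$ with probability $\frac{1}{2}+\frac{1}{2}n^{-1+100c}$, as can be read off from the definition of $Q$ in \cref{def:distr}.

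Now I would compare densities pointwise. Writing $\epsilon = n^{-1+100c}$, for every $x\in\{0,1\}^n$,
\[
\frac{P(X=x)}{U_n(x)} \;=\; (1-\epsilon)^{|x|}(1+\epsilon)^{n-|x|} \;\leq\; (1+\epsilon)^n \;\leq\; e^{\epsilon n} \;=\; e^{n^{100c}}.
\]
Equivalently, $U_n(x)\geq e^{-n^{100c}}P(X=x)$ for every $x$. Summing over $x\in E_1$ and using $P(E_1)\geq e^{-n^{30c}}$,
\[
\alpha \;=\; \sum_{x\in E_1} U_n(x) \;\geq\; e^{-n^{100c}}\sum_{x\in E_1} P(X=x) \;=\; e^{-n^{100c}}\cdot P(E_1) \;\geq\; e^{-n^{100c}-n^{30c}}.
\]
For $n$ sufficiently large, $n^{100c}+n^{30c}\leq n^{130c}$ (this uses $c=1/1000$ and is a routine check), yielding $\alpha\geq e^{-n^{130c}}$ as required.

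There is no serious obstacle here: the argument is just a one-line change-of-measure estimate, since the distortion between the product distribution with bias $n^{-1+100c}$ and uniform is at most $e^{n^{100c}}$, which is dominated by the slack between the exponent $n^{30c}$ in the hypothesis and the exponent $n^{130c}$ in the conclusion. The only point to be careful about is keeping track of which distribution $\alpha$ is defined with respect to (namely uniform, by construction of the Fourier-analytic setup), as opposed to the game distribution $P$.
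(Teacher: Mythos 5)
Your proof is correct and is essentially the same as the paper's: the paper directly bounds $P(E_1) \leq (1+n^{-1+100c})^n\cdot\frac{1}{2^n}\sum_x b(x) \leq e^{n^{100c}}\alpha$, which is precisely your pointwise change-of-measure estimate $P(X=x)/U_n(x)\leq e^{n^{100c}}$ summed over $E_1$, combined with $P(E_1)\geq e^{-n^{30c}}$.
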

	\begin{proof}
		Recalling the distribution $Q$ in \cref{def:distr}, we have
		\begin{align*}
			P(E_1) & = \sum_{x\in\{0,1\}^n}\left(\frac{1}{2}-\frac{1}{2}n^{-1+100c}\right)^{|x|}
				\left(\frac{1}{2}+\frac{1}{2}n^{-1+100c}\right)^{n-|x|}b(x)\\
				& \leq (1+n^{-1+100c})^n\cdot\frac{1}{2^n}\sum_{x\in\{0,1\}^n}b(x) \\
				& \leq e^{n^{100c}}\alpha.
		\end{align*}
		Since $P(E_1)\geq e^{-n^{30c}}$, we get $\alpha\geq e^{-n^{130c}}$.
	\end{proof}
	
	\begin{lemma}\label{lemma:a}
		For every event $E\subseteq\mc{Y}^{\otimes n}$ on $Y$ with $P(E)>0$, we have
		\[\left|\E[a(X)|E]-\widehat{a}(0^n)\right|\leq \frac{1}{P(E)}\cdot n^{-1/3}\alpha.\]
	\end{lemma}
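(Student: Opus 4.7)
The plan is to expand $\E[a(X)\mid E]$ in the Fourier basis on Alice's input cube, and then bound the contribution of the nonzero Fourier coefficients via the Level-$k$ inequality (\cref{lemma:level}), exploiting the fact that Bob's input $Y$ has very small Hamming weight with high probability under the tilted query distribution $Q$.

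First, I would work out the conditional law of $X$ given $Y = y$. Under $P = Q^{\otimes n}$, for each coordinate $i$: if $y_i = 1$ then $(X_i, Y_i, Z_i) = (0,1,0)$ forces $X_i = 0$, and if $y_i = 0$ the two remaining triples $(1,0,0)$ and $(0,0,1)$ have equal conditional probability, so $X_i$ is uniform on $\{0,1\}$. Hence $X\mid Y=y$ is uniform on the subcube $\{x \in \{0,1\}^n : x \cdot y = 0\}$, and the subcube Fourier identity from the preliminaries yields
\[\E[a(X) \mid Y = y] = \frac{1}{2^{n-|y|}} \sum_{x \cdot y = 0} a(x) = \sum_{u \leq y} \widehat{a}(u).\]
Averaging over $y\in E$ and swapping sums then gives
\[\E[a(X) \mid E] - \widehat{a}(0^n) = \sum_{u \neq 0^n} \widehat{a}(u) \cdot P(Y \geq u \mid E).\]

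Next, I would bound each factor separately. Since the coordinates of $Y$ are independent with $P(Y_i = 1) = n^{-1+100c}$, we have
\[P(Y \geq u \mid E) \leq \frac{P(Y \geq u)}{P(E)} = \frac{(n^{-1+100c})^{|u|}}{P(E)}.\]
For the Fourier side, the proposition preceding the lemma gives $\alpha \geq e^{-n^{130c}}$, so $\ln(1/\alpha) \leq n^{130c}$, and applying \cref{lemma:level} to $a$ yields
\[\sum_{|u| = k} |\widehat{a}(u)| \leq (2e n \cdot n^{130c})^{k/2} \alpha = (2e)^{k/2} n^{(1+130c)k/2} \alpha.\]
Combining these two bounds and grouping by $k = |u|$,
\[\bigl|\E[a(X) \mid E] - \widehat{a}(0^n)\bigr| \leq \frac{\alpha}{P(E)} \sum_{k=1}^n (2e)^{k/2} n^{-k/2 + 165 c k}.\]

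Finally, with $c = 1/1000$ the per-level ratio $\sqrt{2e}\, n^{-1/2 + 165c} = \sqrt{2e}\, n^{-0.335}$ is less than $1/2$ for all sufficiently large $n$, so the geometric series is dominated by its first term and is bounded by $n^{-1/3}$, giving the desired bound $n^{-1/3}\alpha/P(E)$. I don't expect a serious obstacle here; the argument is a clean Fourier expansion plus parameter bookkeeping. The delicate point is really upstream: the query distribution had to be retuned in \cref{def:distr} so that $P(Y_i = 1) = n^{-1+100c}$ is small enough to dominate the Level-$k$ growth factor $n^{k/2 + O(ck)}$, while remaining polynomially large for later use in bounding Charlie's winning probability.
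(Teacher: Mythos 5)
Your proof is correct and follows essentially the same route as the paper: expand $\E[a(X)\mid E]$ via the subcube Fourier identity, pull out the $u=0^n$ term, bound $P(Y\geq u\mid E)$ by $P(Y\geq u)/P(E)=(n^{-1+100c})^{|u|}/P(E)$, and sum level by level using the Level-$k$ inequality with $\ln(1/\alpha)\leq n^{130c}$. The only difference is that you spell out the exponent arithmetic ($(2e)^{k/2}n^{-k/2+165ck}$ with per-level ratio below $1/2$) which the paper compresses into a single inequality.
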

	\begin{proof}
		Since $P(X_i=1|Y_i=0)=1/2$, we have
		\begin{align*}
			\E[a(X)|E] &=\sum_{y\in E}\E[a(X)|Y=y]\cdot P(Y=y|E) \\
			&=\sum_{y\in E}\frac{1}{2^{n-|y|}}\sum_{x\cdot y=0}a(x)\cdot P(Y=y|E) \\
			&=\sum_{y\in E}\sum_{u\leq y}\widehat{a}(u)\cdot P(Y=y|E) \\
			&=\sum_{u\in\{0,1\}^n}\widehat{a}(u)\cdot P(Y\geq u|E).
		\end{align*}
		Using \cref{lemma:level} on $a$, with the fact that $\ln(1/\alpha)\leq n^{130c}$, we get
		\begin{align*}
			\left|\E[a(X)|E]-\widehat{a}(0^n)\right| 
			&\leq \sum_{u\neq 0^n} |\widehat{a}(u)|\cdot P(Y\geq u|E) \\
			&\leq \frac{1}{P(E)}\sum_{u\neq 0^n} |\widehat{a}(u)|\cdot P(Y\geq u) \\
			&\leq \frac{1}{P(E)}\sum_{\ell=1}^n (2en\cdot\max\{1,\ln(1/\alpha)\})^{\ell/2}\cdot\alpha\cdot (n^{-1+100c})^\ell \\
			&\leq \frac{1}{P(E)}\cdot n^{-1/3}\alpha. \qedhere
		\end{align*}
	\end{proof}
	With the exact same proof on $b$, we can also get
	\begin{lemma}\label{lemma:b}
		For every event $E\subseteq\mc{Y}^{\otimes n}$ on $Y$ with $P(E)>0$, we have
		\[|P(E_1|E)-\alpha|=\left|\E[b(X)|E]-\widehat{b}(0^n)\right|\leq \frac{1}{P(E)}\cdot n^{-1/3}\alpha.\]
		In particular, when $E=\mc{Y}^{\otimes n}$ we get $P(E_1)\geq (1-n^{-1/3})\alpha.$
	\end{lemma}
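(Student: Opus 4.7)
The plan is to essentially copy the proof of \cref{lemma:a} verbatim, replacing $a$ by $b$ throughout, and then specialize to $E=\mc{Y}^{\otimes n}$ to deduce the ``in particular'' clause. The reason the same argument goes through is that $b(x) = |a(x)| \in \{0,1\} \subseteq \{-1,0,1\}$, that $b$ depends only on Alice's input (since $E_1$ does), that $\frac{1}{2^n} \sum_x b(x) = \widehat{b}(0^n) = \alpha$ by the definition of $\alpha$, and that $b(X) = \mathbf{1}[X \in E_1]$ so $\E[b(X)|E] = P(E_1 | E)$ — this last identity gives the leftmost equality in the statement.

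For the main inequality, I will first check that conditional on $Y=y$ the random variable $X$ is uniform on the subcube $\{x : x \cdot y = 0\}$: indeed, under $Q$, whenever $Y_i=1$ we must have $(X_i,Y_i,Z_i)=(0,1,0)$ forcing $X_i=0$, while whenever $Y_i=0$ the options $(1,0,0)$ and $(0,0,1)$ are equiprobable so $X_i$ is a uniform bit. Hence by the subcube identity recorded in the Fourier preliminaries,
\[\E[b(X) \mid Y=y] = \frac{1}{2^{n-|y|}} \sum_{x\cdot y=0} b(x) = \sum_{u \leq y} \widehat{b}(u),\]
which upon averaging over $y \in E$ gives $\E[b(X) \mid E] = \sum_{u} \widehat{b}(u) \cdot P(Y \geq u \mid E)$. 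Subtracting $\widehat{b}(0^n)$ and applying $P(Y \geq u \mid E) \leq P(Y \geq u)/P(E)$ reduces the problem to bounding $\sum_{u \neq 0^n} |\widehat{b}(u)| \cdot P(Y \geq u)$.

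Under $Q$, each $Y_i$ equals $1$ independently with probability $n^{-1+100c}$, so $P(Y \geq u) = (n^{-1+100c})^{|u|}$. Grouping by level $\ell = |u|$ and applying \cref{lemma:level} to $b$ (with the same $\alpha$, and using $\ln(1/\alpha) \leq n^{130c}$ from the proposition just proved), the sum is bounded by
\[\sum_{\ell=1}^n (2en \cdot \max\{1,\ln(1/\alpha)\})^{\ell/2} \cdot \alpha \cdot (n^{-1+100c})^\ell,\]
which is a geometric-type series whose ratio is at most $(n^{-1+100c})\sqrt{2en \cdot n^{130c}} = n^{-1/2 + O(c)}$; since $c = 1/1000$ this is $o(n^{-1/3})$, and the whole sum is $\leq n^{-1/3}\alpha$ as claimed. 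This matches the identical estimate in the proof of \cref{lemma:a}, and no new calculation is required.

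Finally, for the ``in particular'' statement I take $E = \mc{Y}^{\otimes n}$, which has $P(E) = 1$, so the inequality collapses to $|P(E_1) - \alpha| \leq n^{-1/3}\alpha$, and rearranging gives $P(E_1) \geq (1 - n^{-1/3})\alpha$. There is no real obstacle here since the proof template has already been laid out for $a$; the only point to verify carefully is that every property of $a$ invoked in \cref{lemma:a} (boundedness in $\{-1,0,1\}$, dependence only on Alice's input, and the $L^1$ bound equal to $\alpha$) holds for $b$, which it does by construction.
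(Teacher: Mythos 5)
Your proposal is correct and mirrors the paper exactly: the paper itself proves \cref{lemma:b} by the one-line remark ``with the exact same proof on $b$,'' and you have carried out that substitution carefully, correctly identifying $b=\mathbf{1}_{E_1}$ so that $\E[b(X)|E]=P(E_1|E)$ and $\widehat{b}(0^n)=\alpha$, and then specializing to $E=\mc{Y}^{\otimes n}$ for the final clause. No deviation from the paper's route.
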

	\begin{corollary}\label{col:fourier}
		For every event $E\subseteq\mc{Y}^{\otimes n}$ on $Y$ with $P(E|E_1)>0$, we have
		\[\left|\E[a(X)|E_1,E]-\frac{\widehat{a}(0^n)}{\alpha}\right|\leq\frac{1}{P(E|E_1)}\cdot n^{-1/4}.\]
	\end{corollary}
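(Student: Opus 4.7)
The plan is to derive the corollary from Lemmas~\ref{lemma:a} and \ref{lemma:b} by a standard ratio-of-approximations argument, exploiting the fact that $a$ vanishes outside $E_1$ and that $|a|\le b$ pointwise.

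First I would observe the key identity: since $a(x)=0$ whenever $x\notin E_1$, we have $a(X)=a(X)\mathbf{1}_{E_1}(X)$, and therefore
\[
    \E[a(X)\mid E_1,E]\;=\;\frac{\E[a(X)\mathbf{1}_{E_1}(X)\mid E]}{P(E_1\mid E)}\;=\;\frac{\E[a(X)\mid E]}{P(E_1\mid E)}.
\]
Thus the quantity we must control is the ratio of (numerator approximated by Lemma~\ref{lemma:a}) over (denominator approximated by Lemma~\ref{lemma:b}). Set $A=\E[a(X)\mid E]$, $B=P(E_1\mid E)$, $A_0=\widehat{a}(0^n)$, $B_0=\alpha=\widehat{b}(0^n)$, and $\varepsilon = \frac{1}{P(E)}\cdot n^{-1/3}\alpha$. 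Lemmas~\ref{lemma:a} and~\ref{lemma:b} give $|A-A_0|\le\varepsilon$ and $|B-B_0|\le\varepsilon$.

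Next I would bound the ratio using the identity $\frac{A}{B}-\frac{A_0}{B_0}=\frac{(A-A_0)}{B}-\frac{A_0(B-B_0)}{BB_0}$ and the triangle inequality, giving
\[
    \left|\frac{A}{B}-\frac{A_0}{B_0}\right|\;\le\;\frac{\varepsilon}{B}+\frac{|A_0|\,\varepsilon}{BB_0}.
\]
Here I would use the crucial pointwise inequality $|a(x)|\le b(x)$, which yields $|A_0|=|\widehat{a}(0^n)|\le \widehat{b}(0^n)=B_0=\alpha$. Hence both terms are at most $\varepsilon/B$, giving an overall bound of $2\varepsilon/B=\frac{2n^{-1/3}\alpha}{P(E_1\mid E)\,P(E)}$.

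Finally, I would convert the denominator into the desired form $P(E\mid E_1)$ using $P(E_1\mid E)\,P(E)=P(E\cap E_1)=P(E\mid E_1)\,P(E_1)$, and then apply Lemma~\ref{lemma:b} with $E=\mc{Y}^{\otimes n}$ to conclude $P(E_1)\ge (1-n^{-1/3})\alpha$. This gives
\[
    \left|\E[a(X)\mid E_1,E]-\frac{\widehat{a}(0^n)}{\alpha}\right|\;\le\;\frac{2n^{-1/3}\alpha}{P(E\mid E_1)\,(1-n^{-1/3})\alpha}\;\le\;\frac{1}{P(E\mid E_1)}\cdot n^{-1/4}
\]
for $n$ sufficiently large, completing the proof. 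There is no real obstacle here — the only non-routine point is remembering to invoke $|a|\le b$ to bound $|\widehat{a}(0^n)|\le\alpha$, without which the error bound $\varepsilon$ in the numerator would blow up after division by $B_0\approx\alpha$.
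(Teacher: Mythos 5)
Your proof is correct and follows essentially the same route as the paper's: both reduce $\E[a(X)\mid E_1,E]$ to $\E[a(X)\mid E]/P(E_1\mid E)$ using that $a$ vanishes outside $E_1$, split the resulting ratio difference via the intermediate term $\widehat{a}(0^n)/P(E_1\mid E)$ (your algebraic identity is just the expanded form of the paper's triangle-inequality split), invoke $|\widehat{a}(0^n)|\le\alpha$ from $|a|\le b$, and finish with Lemma~\ref{lemma:b}'s lower bound $P(E_1)\ge(1-n^{-1/3})\alpha$ to convert $P(E_1\wedge E)$ into $P(E\mid E_1)$.
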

	\begin{proof} Since $a(x)\neq 0$ only when $x\in E_1$, we have 
		\begin{align*}
			\left|\E[a(X)|E_1,E]-\frac{\widehat{a}(0^n)}{\alpha}\right| & =
			 \left|\frac{\E[a(X)|E]}{P(E_1|E)}-\frac{\widehat{a}(0^n)}{\alpha}\right| \\
			 &\leq \left|\frac{\E[a(X)|E]}{P(E_1|E)}-\frac{\widehat{a}(0^n)}{P(E_1|E)}\right| +
			 \left|\frac{\widehat{a}(0^n)}{P(E_1|E)}-\frac{\widehat{a}(0^n)}{\alpha}\right| \\
			 &\leq \left|\frac{\E[a(X)|E]}{P(E_1|E)}-\frac{\widehat{a}(0^n)}{P(E_1|E)}\right|
			 +\left|\frac{\alpha}{P(E_1|E)}-1\right| & 
			 (|\widehat{a}(0^n)| \leq \alpha)\\
			 &\leq \frac{2}{P(E_1\wedge E)}\cdot n^{-1/3}\alpha & \hidewidth(\textrm{\cref{lemma:a,lemma:b}}) \\
			 &=\frac{1}{P(E|E_1)}\cdot n^{-1/3}\cdot \frac{2\alpha}{P(E_1)} &\\
			 &\leq \frac{1}{P(E|E_1)}\cdot n^{-1/4}. & 
			 \hidewidth(\textrm{\cref{lemma:b}}) & \qedhere
		\end{align*}
	\end{proof}
	
	\begin{proof}[Proof for \cref{lemma:ind}]
		Suppose that 
		\[P(Y_i=0\wedge g_{i,j}(Y)=0|E_1,E_2)> n^{-3c}.\]
		Let $E$ be the event $E_2\wedge Y_i=0\wedge g_{i,j}(Y)=0$. By argument (c) in \cref{lemma:event}, we have
		\[P(X_i=0\wedge f_{i,j}(X)=0|E_1,E)\leq n^{-4c}.\]
		Therefore $\E[a(X)|E_1,E]\geq 1-2n^{-4c}$. Since $P(E_2|E_1)\geq n^{-2c}$ and $P(E|E_1)= P(E|E_1,E_2)\cdot P(E_2|E_1)\geq n^{-5c}$, by two applications of \cref{col:fourier} (one on the event $E$ and one on the event $E_2$) we have 
		\begin{align*}
			\E[a(X)|E_1,E_2] &\geq \E[a(X)|E_1,E]-\frac{1}{P(E_2|E_1)}\cdot n^{-1/4}-\frac{1}{P(E|E_1)}\cdot n^{-1/4} \\
			&\geq 1-2n^{-4c}-(n^{2c}+n^{5c})\cdot n^{-1/4} \\
			&\geq 1-2n^{-3c}.
		\end{align*}
		This implies that $P(X_i=0\wedge f_{i,j}(X)=0|E_1,E_2)\leq n^{-3c}$.
	\end{proof}
	
	\section{Independence Implies Low Winning Probability}\label{section:step3}
	
	For every $i\in[n]$, let 
		\[G_{1,i}=\Big\{j\in[k]\Mid P(X_i=0\wedge f_{i,j}(X)=0|E_1,E_2)>n^{-3c}\Big\},\]
		\[G_{2,i}=\Big\{j\in[k]\Mid P(Y_i=0\wedge g_{i,j}(Y)=0|E_1,E_2)>n^{-3c}\Big\}.\]
	Then \cref{lemma:ind} implies that $G_{1,i}\cap G_{2,i}=\varnothing$. For each $x,y\in\{0,1\}^n$, let 
	\[B_1(x)=\big\{i\in[n]\mid x_i=0\wedge \exists j\notin G_{1,i}, f_{i,j}(x)=0\big\},\]
	\[B_2(y)=\big\{i\in[n]\mid y_i=0\wedge \exists j\notin G_{2,i}, g_{i,j}(y)=0\big\}.\]
	And for each $z\in\{0,1\}^n$, let
	\[G_1(z)=\{i\in[n]\Mid z_i=0\wedge h_i(z)\in G_{1,i}\},\]
	\[G_2(z)=\{i\in[n]\Mid z_i=0\wedge h_i(z)\in G_{2,i}\}.\]

	\begin{lemma}\label{lemma:fin}
		Suppose $(f,g,h)$ wins on the inputs $(x,y,z)$. Then at least one of the following holds:
		\begin{itemize}
			\item[(a)] $|x|\leq \frac{2}{5}n$ or $|z|\leq \frac{2}{5}n$,
			\item[(b)] $|B_1(x)|\geq n^{1-c}$ or $|B_2(y)|\geq n^{1-c}$,
			\item[(c)] $|B_1(x)|< n^{1-c}$ and $|B_1(x)\cap \one(y)|\geq 4n^{-c}\cdot |y|$,
			\item[(d)] $|G_1(z)|< \frac{1}{4}n-n^{1-c}$ and $|G_1(z)\cap \one(y)|\geq (1-4n^{-c})\cdot |y|$.
		\end{itemize}
	\end{lemma}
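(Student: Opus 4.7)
My plan is to prove the lemma by contrapositive: assuming $(f,g,h)$ wins on $(x,y,z)$ and that none of (a), (b), (c) holds, I will derive (d).

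The first step is to translate the winning predicate into two set inclusions on $[n]$. Since inputs lie in $\mc{S}$, the supports $\one(x)$, $\one(y)$, $\one(z)$ partition $[n]$. For each $i\in\one(y)$ the triple $(x_i,y_i,z_i)$ is $(0,1,0)$, so winning demands $f_{i,h_i(z)}(x)=0$; depending on whether the pointer $h_i(z)$ lies inside or outside $G_{1,i}$, the coordinate $i$ belongs either to $G_1(z)$ or to $B_1(x)$. Hence $\one(y)\subseteq B_1(x)\cup G_1(z)$, and by the symmetric argument on $\one(x)$ we get $\one(x)\subseteq B_2(y)\cup G_2(z)$.

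Next I would unpack the three negations. Failure of (a) gives $|x|,|z|>2n/5$, and since the three supports partition $[n]$ this forces $|y|=n-|x|-|z|<n/5$. Failure of (b) gives $|B_1(x)|,|B_2(y)|<n^{1-c}$. Combined with $|B_1(x)|<n^{1-c}$, failure of (c) forces $|B_1(x)\cap\one(y)|<4n^{-c}|y|$.

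The second clause of (d) then falls out immediately from $\one(y)\subseteq B_1(x)\cup G_1(z)$:
\[|G_1(z)\cap \one(y)| \geq |y| - |B_1(x)\cap \one(y)| \geq (1-4n^{-c})|y|.\]
For the first clause of (d), the crucial input is that $G_1(z)\cap G_2(z)=\varnothing$, which follows coordinate-wise from \cref{lemma:ind} (each $j$ can lie in at most one of $G_{1,i},G_{2,i}$). Both sets are contained in $\{i:z_i=0\}$, whose size is $|x|+|y|$. The containment $\one(x)\subseteq B_2(y)\cup G_2(z)$ together with $|B_2(y)|<n^{1-c}$ yields $|G_2(z)|\geq |x|-n^{1-c}$, and so
\[|G_1(z)| \leq (|x|+|y|) - |G_2(z)| \leq |y| + n^{1-c} < \tfrac{n}{5}+n^{1-c}.\]
For $n$ sufficiently large, the slack $n/4-n/5 = n/20$ dominates $2n^{1-c}$, which gives $|G_1(z)|<n/4-n^{1-c}$.

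The proof is essentially bookkeeping, and I do not anticipate a substantive obstacle. The one thing to keep straight is the asymmetry of indices: $B_1,G_1$ pertain to Alice's zero-answers and are used to cover $\one(y)$ (the coordinates where Alice plays with Charlie), rather than $\one(x)$; symmetrically for $B_2,G_2$. Once the partition of $[n]$ and the disjointness $G_{1,i}\cap G_{2,i}=\varnothing$ from \cref{lemma:ind} are in place, the two containments combine to force (d) whenever the other three conditions fail.
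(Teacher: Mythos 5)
Your proof is correct and uses the same key facts as the paper: the two containments $\one(y)\subseteq B_1(x)\cup G_1(z)$ and $\one(x)\subseteq B_2(y)\cup G_2(z)$ from the winning predicate, the disjointness $G_1(z)\cap G_2(z)=\varnothing$ from \cref{lemma:ind}, and the fact that both lie inside $\{i:z_i=0\}$. The only difference is a minor re-packaging: the paper assumes (d) also fails and derives a contradiction via the bound $|x|\leq \frac{7}{20}n+2n^{1-c}$, whereas you bound $|G_1(z)|\leq |y|+n^{1-c}$ directly to establish (d); the underlying inequality is the same.
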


	\begin{proof}
		Since $G_{1,i}\cap G_{2,i}=\varnothing$ for every $i$, we know that $G_1(z)\cap G_2(z)=\varnothing$ for every $z$. On the other hand, by the definition of the game (\cref{def:game}), in order to win it must hold
		\begin{align*}
			\one(y)\subseteq G_1(z)\cup B_1(x) & & \textrm{(since $x_{i,h_i(z)}=0$ when $x_i=z_i=0$)} \\
			\one(x)\subseteq G_2(z)\cup B_2(y) & & \textrm{(since $y_{i,h_i(z)}=0$ when $y_i=z_i=0$)}
		\end{align*}
		Now suppose none of the items (a) to (d) holds. Since
		\[|y|\leq |G_1(z)\cap\one(y)|+|B_1(x)\cap\one(y)|,\]
		it implies that $|G_1(z)|\geq \frac{1}{4}n-n^{1-c}$. Therefore we have
		\[|x|\leq |G_2(z)|+|B_2(y)|\leq n-|z|-|G_1(z)|+|B_2(y)|\leq \frac{7}{20}n+2n^{1-c},\]
		which contradicts the fact that $|x|\geq \frac{2}{5}n$.
	\end{proof}
	
	\begin{proposition}\label{prop:chernoff}
		$P\left(\big||X|-n/2\big|\geq n/10\right)\leq e^{-n/200}$. The same holds when replacing $X$ with $Z$.
	\end{proposition}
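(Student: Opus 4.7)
The plan is to apply the Chernoff bound (\cref{lemma:chernoff}) directly. Under $P = Q^{\otimes n}$, the random variables $X_1, \dots, X_n$ are i.i.d.\ Bernoulli indicators of the event $(X_i, Y_i, Z_i) = (1,0,0)$, so each has mean $\mu := \tfrac{1}{2} - \tfrac{1}{2}n^{-1+100c}$ (which equals $\tfrac{1}{2} - \tfrac{1}{2}n^{-9/10}$ since $c = 1/1000$). Thus $|X| = \sum_i X_i$ has mean $n\mu = n/2 - \tfrac{1}{2}n^{1/10}$, which for large $n$ is extremely close to $n/2$. Exactly the same holds for $|Z|$.

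To bound the lower tail $P(|X| \leq 2n/5)$, I would choose $\delta$ slightly less than $1/5$ so that $(1-\delta)\mu n \geq 2n/5$ holds; since $\mu \to 1/2$, any $\delta < 1 - 2/(5\mu)$ works, and for large enough $n$ one can take $\delta$ as close to $1/5$ as needed. The lower-tail bound of \cref{lemma:chernoff} then gives $P(|X| \leq 2n/5) \leq e^{-\delta^2 \mu n / 2} \leq e^{-n/101}$ for sufficiently large $n$. For the upper tail $P(|X| \geq 3n/5)$, choose $\delta$ slightly less than $1/5$ with $(1+\delta)\mu n \leq 3n/5$ and apply the upper-tail bound to get $P(|X| \geq 3n/5) \leq e^{-\delta^2 \mu n /3} \leq e^{-n/151}$ for large $n$.

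A union bound combines these into $P(||X| - n/2| \geq n/10) \leq e^{-n/101} + e^{-n/151} \leq e^{-n/200}$ for all sufficiently large $n$. The argument for $Z$ is identical, since $Z_i$ is the Bernoulli indicator of $(X_i, Y_i, Z_i) = (0,0,1)$, which has the same probability $\mu$ under $Q$. There is no real obstacle here — the only mild care is that $n/2$ is not quite the mean, but the discrepancy is only $O(n^{1/10})$, easily absorbed into the slack $n/10$ and into the choice of $\delta$.
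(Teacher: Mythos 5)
Your proposal is correct and is exactly the intended argument: the paper's proof consists of the single line ``This is a direct application of the Chernoff Bound (\cref{lemma:chernoff}),'' and you have simply carried out that application, correctly identifying $\mu = \tfrac12 - \tfrac12 n^{-1+100c}$ for both $X_i$ and $Z_i$ and checking that the $O(n^{1/10})$ offset of the mean from $n/2$ is absorbed by the $n/10$ slack. No discrepancy with the paper.
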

	\begin{proof}
		This is a direct application of the Chernoff Bound (\cref{lemma:chernoff}).
	\end{proof}
	
	\begin{lemma}\label{lemma:intx}
		Let $m\geq n^{1-c}$, and $M:\{0,1\}^n\rightarrow 2^{[n]}$ satisfies $M(x)\cap\one(x)=\varnothing$ for all $x\in\{0,1\}^n$. Then we have
		\[P\left(|M(X)|< m\wedge |M(X)\cap \one(Y)|\geq \frac{4m}{n}\cdot |Y|\right)\leq e^{-n^{90c}}.\]
		And the same holds when replacing $X$ with $Z$.
	\end{lemma}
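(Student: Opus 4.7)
The plan is to fix $X=x$ (resp.\ $Z=z$ for the second assertion) and reduce the bound to two one-sided Chernoff inequalities, exploiting the product structure of $P=Q^{\otimes n}$ together with the Hamming-weight-one constraint $(X_i,Y_i,Z_i)\in\mc{S}$. Let $p=n^{-1+100c}$ be the marginal $P(Y_i=1)$ from \cref{def:distr}. As a preliminary step I would use \cref{prop:chernoff} to discard the event $|X|>3n/5$, which has probability at most $e^{-n/200}$; this gives the slack $n-|x|\geq 2n/5$ used below.

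For any fixed $x$ with $|x|\leq 3n/5$ and $|M(x)|<m$, I condition on $X=x$. The weight-one constraint forces $Y_i=0$ for $i\in\one(x)$, while the remaining $Y_i$ are i.i.d.\ Bernoulli$(p')$ with $p'=\tfrac{2p}{1+p}\in[p,2p]$ (direct marginal computation from \cref{def:distr}). Hence $|Y|\sim\mathrm{Bin}(n-|x|,p')$ and $A:=|M(x)\cap\one(Y)|$ is stochastically dominated by $\mathrm{Bin}(m,p')$ since $|M(x)|<m$. I split the bad event $\{A\geq\tfrac{4m}{n}|Y|\}$ into two subcases: (a) $A\geq\tfrac{6}{5}mp'$, or (b) $A<\tfrac{6}{5}mp'$, in which case $A\geq\tfrac{4m}{n}|Y|$ forces $|Y|\leq\tfrac{3}{10}np'$. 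The numerical identity $\tfrac{6}{5}mp'=\tfrac{4m}{n}\cdot\tfrac{3}{10}np'$ is what makes these two subcases cover the bad event.

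Case (a) is bounded by the multiplicative Chernoff bound (\cref{lemma:chernoff}) with $\delta=1/5$ applied to $\mathrm{Bin}(m,p')$, giving probability at most $e^{-mp'/75}$. For case (b), note that $|x|\leq 3n/5$ implies $\mathbb{E}[|Y|\mid X=x]=(n-|x|)p'\geq\tfrac{2}{5}np'$, so $\tfrac{3}{10}np'\leq\tfrac{3}{4}\mathbb{E}[|Y|\mid X=x]$, and a Chernoff lower tail with $\delta=1/4$ gives probability at most $e^{-(n-|x|)p'/32}\leq e^{-np/80}$. Using $mp'\geq mp\geq n^{1-c}\cdot n^{-1+100c}=n^{99c}$ and $np=n^{100c}$, summing over $x$ and adding the discarded event yields a total bound of $e^{-n/200}+e^{-n^{99c}/75}+e^{-n^{100c}/80}\leq e^{-n^{90c}}$ for sufficiently large $n$. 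The statement for $Z$ follows from an identical argument: $|Z|$ is also concentrated around $n/2$ by \cref{prop:chernoff} (since $P(Z_i=1)=\tfrac{1-p}{2}$), and conditioning on $Z=z$ with $|z|\leq 3n/5$ again makes $(Y_i)_{i\notin\one(z)}$ i.i.d.\ Bernoulli$(p')$ by the analogous marginal computation, so every bound carries through verbatim. The only real subtlety is numerical: the split must match the target ratio $\tfrac{4m}{n}$, and the required slack comes precisely from the factor $(n-|x|)/n\geq 2/5$ that concentration of $|X|$ (or $|Z|$) provides.
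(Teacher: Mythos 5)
Your proof is correct and follows essentially the same route as the paper: discard the tail event $|X|>3n/5$ via \cref{prop:chernoff}, condition on $X=x$, note that the $Y_i$ for $i\notin\one(x)$ are i.i.d.\ Bernoulli with the conditional parameter, and then combine a lower-tail Chernoff bound for $|Y|$ with an upper-tail Chernoff bound for $|M(x)\cap\one(Y)|$ so that the two thresholds multiply to rule out the bad event. The only differences from the paper's proof are cosmetic: your thresholds $\tfrac{6}{5}mp'$ and $\tfrac{3}{10}np'$ versus the paper's $\tfrac{4}{3}mp$ and $\tfrac{1}{3}np$ (the paper writes $p$ for the conditional parameter you call $p'$), yielding slightly different but equally adequate exponents.
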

	\begin{proof}
		 Fix an $x\in\{0,1\}^n$ with $|x|\leq \frac{3}{5}n$ and $|M(x)|\leq m$. By \cref{prop:chernoff}, this makes for a probability of $P\big(|X|>\frac{3}{5}n\big)\leq e^{-n/200}$.
		 
		 Since $p=P(Y_i=1|X_i=0)>n^{-1+100c}$, by applying Chernoff Bound on the sets $[n]\setminus\one(x)$ and $M(x)$ respectively, we have
		\[P\left(|Y|\leq \frac{np}{3}\Mid X=x\right)\leq e^{-np/180}<e^{-n^{100c}/180},\]
		\[P\left(|M(x)\cap\one(Y)|\geq \frac{4mp}{3}\Mid X=x\right)\leq e^{-mp/27}<e^{-n^{99c}/27}.\]
		Therefore by union bound,
		\begin{align*}
			P\left(|M(X)|< m\wedge |M(X)\cap \one(Y)|\geq \frac{4m}{n}\cdot |Y|\right)
			& \leq e^{-n/200}+e^{-n^{100c}/180}+e^{-n^{99c}/27} \\
			&\leq e^{-n^{90c}}. \qedhere
		\end{align*}
	\end{proof}
	
	Now we can bound the probability for each item in \cref{lemma:fin}, conditioned on $E_1\wedge E_2$. Recall that $P(E_1\wedge E_2)\geq e^{-n^{30c}}n^{-2c}$ by \cref{lemma:event}.
	\begin{itemize}
		\item[(a)] By \cref{prop:chernoff} we have
		\[P\big(|X|\leq 2n/5| E_1,E_2\big)\leq \frac{1}{P(E_1\wedge E_2)}\cdot e^{-n/200}\leq e^{-n/300}.\]
		Similarly we have $P\big(|Z|\leq 2n/5| E_1,E_2\big)\leq e^{-n/300}$.
		\item[(b)] For each $i\in[n]$, by the definitions of $G_{1,i},G_{2,i}$ and $B_1(x),B_2(x)$, using the union bound over $j\in[k]$ we get
		\[P(i\in B_1(X)|E_1,E_2)\leq kn^{-3c},\quad 
		P(i\in B_2(Y)|E_1,E_2)\leq kn^{-3c}.\]
		Therefore we can bound the expectations of $|B_1(X)|$ and $|B_2(Y)|$:
		\[\E\big[|B_1(X)|\big| E_1,E_2\big]\leq kn^{1-3c},\quad
		\E\big[|B_2(Y)|\big| E_1,E_2\big]\leq kn^{1-3c}.\]
		Thus by Markov's inequality we have
		\[P\big(|B_1(X)|\geq n^{1-c}\big|E_1,E_2\big)\leq kn^{-2c},\quad
		P\big(|B_2(Y)|\geq n^{1-c}\big|E_1,E_2\big)\leq kn^{-2c}.\]
		\item[(c)] Applying \cref{lemma:intx} on $B_1(X)$ with $m=n^{1-c}$, we have
		\begin{align*}
			&P\Big(|B_1(X)|<n^{1-c}\wedge |B_1(X)\cap \one(Y)|\geq 4n^{-c}\cdot |Y|\Mid E_1,E_2\Big) \\
			\leq\ & \frac{1}{P(E_1\wedge E_2)}\cdot e^{-n^{90c}}\leq e^{-n^{80c}}.
		\end{align*}
		\item[(d)] Same as (c), but applying \cref{lemma:intx} on $G_1(Z)$ with $m=\frac{1}{4}n-n^{1-c}\geq n^{1-c}$, we get
		\[P\Big(|G_1(Z)|<\frac{1}{4}n-n^{1-c} \wedge |G_1(Z)\cap \one(Y)|\geq (1-4n^{-c})\cdot |Y|\Mid E_1,E_2\Big)\leq e^{-n^{80c}}.\]
	\end{itemize}
	Putting everything together by a union bound, we get
	\[P(W|E_1,E_2)\leq 2e^{-n/300}+2kn^{-2c}+2e^{-n^{80c}}<n^{-c},\]
	as $k$ is a constant and $n$ is sufficiently large. This leads to a contradiction to the result (b) in \cref{lemma:event}, which refutes the assumption in \cref{claim:main}, and thus proves \cref{thm:main}.
	\qedhere
	
	\paragraph{Acknowledgements.} We thank Justin Holmgren for important conversations and collaboration in early stages of this work.
	
	\bibliography{main}

\newcommand{\etalchar}[1]{$^{#1}$}
\begin{thebibliography}{BOGKW88}

\bibitem[AK09]{AK09}
Noga Alon and Bo'az Klartag.
\newblock Economical toric spines via {C}heeger's inequality.
\newblock {\em J. Topol. Anal.}, 1(2):101--111, 2009.

\bibitem[BBCR13]{BBCR13}
Boaz Barak, Mark Braverman, Xi~Chen, and Anup Rao.
\newblock How to compress interactive communication.
\newblock {\em SIAM J. Comput.}, 42(3):1327--1363, 2013.
\newblock (also in STOC 2010).

\bibitem[BG15]{BG15}
Mark Braverman and Ankit Garg.
\newblock Small value parallel repetition for general games.
\newblock In {\em STOC}, pages 335--340, 2015.

\bibitem[BGS98]{BGS98}
Mihir Bellare, Oded Goldreich, and Madhu Sudan.
\newblock Free bits, {PCP}s, and nonapproximability---towards tight results.
\newblock {\em SIAM J. Comput.}, 27(3):804--915, 1998.
\newblock (also in FOCS 1995).

\bibitem[BOGKW88]{BOGKW88}
Michael Ben-Or, Shafi Goldwasser, Joe Kilian, and Avi Wigderson.
\newblock Multi-prover interactive proofs: How to remove intractability
  assumptions.
\newblock In {\em STOC}, pages 113--131, 1988.

\bibitem[BRR{\etalchar{+}}09]{BRRRS09}
Boaz Barak, Anup Rao, Ran Raz, Ricky Rosen, and Ronen Shaltiel.
\newblock Strong parallel repetition theorem for free projection games.
\newblock In {\em APPROX-RANDOM}, pages 352--365, 2009.

\bibitem[BRWY13]{BRWY13}
Mark Braverman, Anup Rao, Omri Weinstein, and Amir Yehudayoff.
\newblock Direct products in communication complexity.
\newblock In {\em FOCS}, pages 746--755, 2013.

\bibitem[CHTW04]{CHTW04}
Richard Cleve, Peter H{\o}yer, Benjamin Toner, and John Watrous.
\newblock Consequences and limits of nonlocal strategies.
\newblock In {\em CCC}, pages 236--249, 2004.

\bibitem[DHVY17]{DHVY17}
Irit Dinur, Prahladh Harsha, Rakesh Venkat, and Henry Yuen.
\newblock Multiplayer parallel repetition for expanding games.
\newblock In {\em ITCS}, volume~67 of {\em LIPIcs}, pages Art. No. 37, 16,
  2017.

\bibitem[DS14]{DS14}
Irit Dinur and David Steurer.
\newblock Analytical approach to parallel repetition.
\newblock In {\em STOC}, pages 624--633, 2014.

\bibitem[Fei98]{Fei98}
Uriel Feige.
\newblock A threshold of {$\ln n$} for approximating set cover.
\newblock {\em J. ACM}, 45(4):634--652, 1998.
\newblock (also in STOC 1996).

\bibitem[FK91]{FK91}
H.~Furstenberg and Y.~Katznelson.
\newblock A density version of the {H}ales-{J}ewett theorem.
\newblock {\em J. Anal. Math.}, 57:64--119, 1991.

\bibitem[FKO07]{FKO07}
Uriel Feige, Guy Kindler, and Ryan O'Donnell.
\newblock Understanding parallel repetition requires understanding foams.
\newblock In {\em CCC}, pages 179--192, 2007.

\bibitem[FRS94]{FRS94}
Lance Fortnow, John Rompel, and Michael Sipser.
\newblock On the power of multi-prover interactive protocols.
\newblock {\em Theoret. Comput. Sci.}, 134(2):545--557, 1994.

\bibitem[FV02]{FV02}
Uriel Feige and Oleg Verbitsky.
\newblock Error reduction by parallel repetition---a negative result.
\newblock {\em Combinatorica}, 22(4):461--478, 2002.

\bibitem[GHM{\etalchar{+}}21]{GHMRZ21}
Uma Girish, Justin Holmgren, Kunal Mittal, Ran Raz, and Wei Zhan.
\newblock Parallel repetition for the {GHZ} game: {A} simpler proof.
\newblock In {\em APPROX-RANDOM}, pages 62:1--62:19, 2021.

\bibitem[GHM{\etalchar{+}}22]{GHMRZ22}
Uma Girish, Justin Holmgren, Kunal Mittal, Ran Raz, and Wei Zhan.
\newblock Parallel repetition for all 3-player games over binary alphabet.
\newblock In {\em STOC}, 2022.

\bibitem[H{\aa}s01]{Has01}
Johan H{\aa}stad.
\newblock Some optimal inapproximability results.
\newblock {\em J. ACM}, 48(4):798--859, 2001.
\newblock (also in STOC 1997).

\bibitem[HHR16]{HHR16}
Jan Hazla, Thomas Holenstein, and Anup Rao.
\newblock Forbidden subgraph bounds for parallel repetition and the density
  hales-jewett theorem.
\newblock {\em CoRR}, abs/1604.05757, 2016.

\bibitem[Hol09]{Hol09}
Thomas Holenstein.
\newblock Parallel repetition: simplifications and the no-signaling case.
\newblock {\em Theory Comput.}, 5:141--172, 2009.
\newblock (also in STOC 2007).

\bibitem[HR20]{HR20}
Justin Holmgren and Ran Raz.
\newblock A parallel repetition theorem for the {GHZ} game.
\newblock {\em CoRR}, abs/2008.05059, 2020.

\bibitem[HY19]{HY19}
Justin Holmgren and Lisa Yang.
\newblock The parallel repetition of non-signaling games: counterexamples and
  dichotomy.
\newblock In {\em STOC}, pages 185--192, 2019.

\bibitem[KORW08]{KORW08}
Guy Kindler, Ryan O'Donnell, Anup Rao, and Avi Wigderson.
\newblock Spherical cubes and rounding in high dimensions.
\newblock In {\em FOCS}, pages 189--198, 2008.

\bibitem[MR21]{MR21}
Kunal Mittal and Ran Raz.
\newblock Block rigidity: strong multiplayer parallel repetition implies
  super-linear lower bounds for {T}uring machines.
\newblock In {\em ITCS}, pages Art. No. 71, 15, 2021.

\bibitem[MU05]{MU05}
Michael Mitzenmacher and Eli Upfal.
\newblock {\em Probability and computing}.
\newblock Cambridge University Press, Cambridge, 2005.
\newblock Randomized algorithms and probabilistic analysis.

\bibitem[O'D14]{Odo14}
Ryan O'Donnell.
\newblock {\em Analysis of {B}oolean functions}.
\newblock Cambridge University Press, New York, 2014.

\bibitem[Pol12]{Pol12}
D.~H.~J. Polymath.
\newblock A new proof of the density {H}ales-{J}ewett theorem.
\newblock {\em Ann. of Math. (2)}, 175(3):1283--1327, 2012.

\bibitem[PRW97]{PRW97}
Itzhak Parnafes, Ran Raz, and Avi Wigderson.
\newblock Direct product results and the {GCD} problem, in old and new
  communication models.
\newblock In {\em STOC}, pages 363--372, 1997.

\bibitem[Rao11]{Rao11}
Anup Rao.
\newblock Parallel repetition in projection games and a concentration bound.
\newblock {\em SIAM J. Comput.}, 40(6):1871--1891, 2011.
\newblock (also in STOC 2008).

\bibitem[Raz98]{Raz98}
Ran Raz.
\newblock A parallel repetition theorem.
\newblock {\em SIAM J. Comput.}, 27(3):763--803, 1998.
\newblock (also in STOC 1995).

\bibitem[Raz10]{Raz10}
Ran Raz.
\newblock Parallel repetition of two prover games.
\newblock In {\em CCC}, pages 3--6, 2010.

\bibitem[RR12]{RR12}
Ran Raz and Ricky Rosen.
\newblock A strong parallel repetition theorem for projection games on
  expanders.
\newblock In {\em CCC}, pages 247--257, 2012.

\bibitem[Ver96]{Ver96}
Oleg Verbitsky.
\newblock Towards the parallel repetition conjecture.
\newblock {\em Theoret. Comput. Sci.}, 157(2):277--282, 1996.

\end{thebibliography}
	\bibliographystyle{alpha}

\end{document}